 \newtheorem{theorem}{Theorem}[section]
\newtheorem{observation}[theorem]{Observation}
 \newtheorem{lemma}[theorem]{Lemma}
 \def\squarebox#1{\hbox to #1{\hfill\vbox to #1{\vfill}}}
 \newcommand{\qed}{\hspace*{\fill}
 	\vbox{\hrule\hbox{\vrule\squarebox{.667em}\vrule}\hrule}\smallskip}
 \newenvironment{proof}{\begin{trivlist}
 		\item[\hspace{\labelsep}{\bf\noindent Proof: }]
 	}{\qed\end{trivlist}}
 \newenvironment{proof-synopsis}{\begin{trivlist}
 		\item[\hspace{\labelsep}{\bf\noindent Proof (Synopsis): }]
 	}{\qed\end{trivlist}}
\renewcommand{\phi}{\varphi}
\newcommand{\set}[1]{\left\{#1\right\}}
\newcommand{\NN}{{\mathbbm{N}}}
\newcommand{\ZZ}{{\mathbbm{Z}}}
\newcommand{\RR}{{\mathbbm{R}}}
\newcommand{\CC}{{\mathbbm{C}}}
\newcommand{\QQ}{{\mathbbm{Q}}}
\renewcommand{\AA}{{\mathbbm{A}}}
\newcommand{\B}{{\cal B}}
\newcommand{\A}{{\cal A}}
\newcommand{\co}[1]{\overline{#1}}
\newcommand{\re}{{\rm Re}}
\newcommand{\im}{{\rm Im}}
\newcommand{\norm}[1]{{\left\lVert#1\right\rVert}}
\newcommand{\heig}[1]{{H(#1)}}
\newcommand{\relBow}{\mathrel{\bowtie}}
\newcommand{\PSPACE}{{\bf PSPACE}}
\newcommand{\NPh}{{\bf NP-Hard }}
\newcommand{\posslp}{{\bf PosSLP}}
\newcommand{\NPposslp}{$\text{{\bf NP}}^\text{{\bf PosSLP}}$}
\newcommand{\NPtoRP}{$\text{{\bf NP}}^\text{{\bf RP}} $ }
\newcommand{\vect}[2]{\begin{pmatrix}
		#1 \\ #2
	\end{pmatrix}}
\title{The Polytope-Collision Problem}
\author{Shaull Almagor\\Department of Computer Science\\Oxford University, UK\\shaull.almagor@cs.ox.ac.uk \and Jo\"el Ouaknine\\Department of Computer Science\\Oxford University, UK\\joel@cs.ox.ac.uk\and James Worrell\\Department of Computer Science\\Oxford University, UK\\jbw@cs.ox.ac.uk}
\date{}
\begin{document}
\maketitle
\begin{abstract}
	The {\em Orbit Problem} consists of determining, given a matrix $\A\in \RR^{d\times d}$ and vectors $x,y\in \RR^d$, whether there exists $n\in \NN$ such that $\A^n=y$. This problem was shown to be decidable in a seminal work of Kannan and Lipton in the 1980s. Subsequently, Kannan and Lipton noted that the Orbit Problem becomes considerably harder when the {\em target} $y$ is replaced with a subspace of $\RR^d$. Recently, it was shown that the problem is decidable for vector-space targets of dimension at most three, followed by another development showing that the problem is in $\PSPACE$ for polytope targets of dimension at most three.
	
	In this work, we take a dual look at the problem, and consider the case where the {\em initial} vector $x$ is replaced with a polytope $P_1$, and the target is a polytope $P_2$. Then, the question is whether there exists $n\in \NN$ such that $\A^nP_1\cap P_2\neq \emptyset$. We show that the problem can be decided in $\PSPACE$ for dimension at most three. As in previous works, decidability in the case of higher dimensions is left open, as the problem is known to be hard for long-standing number-theoretic open problems.
	
	Our proof begins by formulating the problem as the satisfiability of a parametrized family of sentences in the existential first-order theory of real-closed fields. Then, after removing quantifiers, we are left with instances of simultaneous positivity of sums of exponentials. Using techniques from transcendental number theory, and separation bounds on algebraic numbers, we are able to solve such instances in $\PSPACE$.
\end{abstract}
\pagebreak
\section{Introduction}
\label{sec: problem intro}
Given a linear transformation $\A$ over the vector space $\RR^d$, together with a starting point $x$, the {\em orbit} of $x$ under $\A$ is the infinite sequence $x,\A x,\A^2x,\ldots$. A natural decision problem in discrete linear dynamical systems is whether the orbit of $x$ ever hits a particular target set $V$ (assuming suitable, effective representations of $\A$, $x$, and $V$). An early instance of this problem was raised by
Harrison in 1969~\cite{harrison1969lectures} for the special case in which $V$ is simply a point in $\RR^d$. Decidability remained open for over ten years, and was finally settled in a seminal paper of Kannan and Lipton, who moreover gave a polynomial-time decision procedure~\cite{kannan1980orbit}. In subsequent
work~\cite{kannan1986polynomial}, Kannan and Lipton noted that the Orbit Problem becomes considerably harder when the target $V$ is replaced by a subspace of $\RR^d$: indeed, if $V$ has dimension $d-1$, the problem is equivalent to the {\em Skolem Problem}, known to be \NPh but whose decidability has remained open for over 80 years~\cite{tao2008structure}.
However, for low-dimensional target spaces, the Orbit Problem becomes more tractable. Indeed, it was recently shown in~\cite{chonev2013orbit} that the problem is decidable for vector-space targets of dimension at most three, with polynomial-time complexity for one-dimensional targets, and complexity in \NPtoRP for two- and three-dimensional targets. Another development followed in~\cite{chonev2015polyhedron}, where the authors consider more intricate target sets, namely polytopes. It is shown in~\cite{chonev2015polyhedron} that up to dimension three, the problem can be solved in $\PSPACE$. In addition, it is shown that for higher dimensions, the problem becomes hard with respect to long-standing number-theoretic open problems.

A key motivation for studying the Orbit Problem comes from program verification, particularly the problem of determining
whether a simple while loop with affine assignments and guards will terminate or not. 
Similar reachability questions were considered and left open by Lee and Yannakakis in~\cite{lee1992online} for what they termed “real affine transition systems”. Similarly, decidability for the case of a single-halfspace target was mentioned as an open problem by Braverman in~\cite{braverman2006termination}. 

An important aspect of termination problems for linear loops is the quantification of the initial point. Traditionally, the `Termination problem' in the program-verification literature (see, e.g.~\cite{ben2014ranking}) refers to termination of while loops for {\em all} possible initial starting points. In~\cite{ouaknine2015termination} the traditional Termination Problem is solved over the integers for while loops, assuming diagonalisability of the associated linear transformation. To our knowledge, very little else is known on the general problem of universally quantified inputs. In contrast, the works in~\cite{chonev2013orbit,chonev2015polyhedron} study the termination problem where the input is fixed (but the target space is complicated). This corresponds to verifying the termination of a concrete run of a linear loop. It should be noted that the techniques used for analyzing the latter differ significantly from the former.

In this work, we take a dual look at the problem, and study the case where the input is existentially quantified. Thus, we are given a set $P_1\subseteq \RR^d$, and a target set $P_2$, and the problem is to decide whether there exists $x\in P_1$ and $n\in \NN$ such that $\A^n x\in P_2$. In practice, this corresponds to deciding {\em safety} properties of linear loops: we think of $P_2$ as some error set, and the problem is to decide whether there exists an input that would cause the program to reach the error set. 

Specifically, the focus of this paper is the {\bf 3D Polytope-Collision Problem} ({\bf 3DPCP}, for short): Given two polytopes $P_1$ and $P_2$ in $\RR^3$ (represented as an intersection of halfspaces) and a matrix with real-algebraic entries\footnote{We denote by $\AA$ the set of algebraic numbers.} $\A\in (\AA\cap \RR)^{3\times 3}$, determine whether there exists a point $x\in P_1$ and a natural number $n$ such that $\A^n x\in P_2$. 

We present the following effectiveness result on the 3D Polytope-Collision Problem.
\begin{theorem}
	\label{thm:main PSPACE}
	 3DPCP is decidable in $\PSPACE$.
\end{theorem}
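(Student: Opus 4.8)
The plan is to reduce 3DPCP, uniformly in $n$, to an arithmetic question about sums of exponentials in a single variable $n$, and then to solve that question in polynomial space. Fix an instance: $\A\in(\AA\cap\RR)^{3\times 3}$ and polytopes $P_1=\{x\in\RR^3:B_1x\le b_1\}$, $P_2=\{x\in\RR^3:B_2x\le b_2\}$. For a fixed $n$, the assertion ``$\exists x\in P_1$ with $\A^nx\in P_2$'' is the feasibility over $\RR^3$ of the linear system $B_1x\le b_1\wedge B_2\A^nx\le b_2$; this is a sentence $\phi_n$ in the existential first-order theory of real-closed fields whose coefficients are entries of $\A^n$ together with entries of $B_1,b_1,B_2,b_2$. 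Eliminating the three quantified variables, by Fourier--Motzkin elimination or the Tarski--Seidenberg theorem, yields an equivalent quantifier-free $\psi_n$: a Boolean combination of finitely many polynomial conditions $g_j(\A^n)\relBow 0$ with $\relBow\in\{>,\ge,=\}$, where $g_1,\ldots,g_k$ are fixed polynomials of bounded degree (only three variables are eliminated) depending only on $B_1,b_1,B_2,b_2$; a careful choice of elimination order keeps the equality conditions confined to linear recurrences of order at most three, for which the Skolem problem is decidable.

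Next I would exhibit the dependence on $n$. Writing $\A$ in Jordan normal form over $\AA$, each entry $(\A^n)_{pq}$ is a finite sum of terms $c\,n^{t}\lambda^n$ with $\lambda$ an eigenvalue of $\A$ and $c\in\AA$; grouping conjugate pairs, $(\A^n)_{pq}$ is a real exponential polynomial with effectively computable algebraic data. Substituting into $g_1,\ldots,g_k$ and re-expanding, $\psi_n$ becomes a Boolean combination of atoms $f_j(n)\relBow 0$, where each $f_j(n)=\sum_\ell q_{j,\ell}(n)\,\mu_\ell^n$ is a real exponential polynomial whose bases $\mu_\ell$ are products of boundedly many eigenvalues of $\A$ (hence finitely many fixed algebraic numbers, among which all multiplicative relations are computable) and whose $q_{j,\ell}$ have algebraic coefficients of bounded degree. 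Thus 3DPCP is equivalent to deciding whether some $n\in\NN$ satisfies a fixed Boolean combination of sign conditions on a fixed finite family of exponential polynomials --- an instance of simultaneous positivity of sums of exponentials.

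To decide this I would split on the size of $n$. There is a computable threshold $N_0$, in general singly exponential in the input, beyond which the sign of each $f_j(n)$ is governed by its terms of maximal modulus; the bound on $N_0$ comes from comparing growth rates of the $\mu_\ell$ together with effective lower bounds on linear forms in logarithms (Baker's theorem). For $n\ge N_0$ there are three regimes: if the dominant term of $f_j$ stems from a single real base, the sign of $f_j(n)$ is ultimately fixed by the leading polynomial coefficient and the parity of $n$; if it stems from a conjugate pair $\rho e^{\pm i\theta}$ with $\theta/\pi$ rational, the dominant contribution is eventually periodic in $n$ and one works residue class by residue class; and if $\theta/\pi$ is irrational, one uses Weyl equidistribution of $(n\theta_1,\ldots,n\theta_m)$ on the subtorus determined by the computable $\ZZ$-linear relations among the arguments --- together with Baker-type bounds controlling the boundary cases where a dominant term nearly cancels --- to compute exactly which sign patterns of $(f_1(n),\ldots,f_k(n))$ occur for infinitely many $n\ge N_0$. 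Matching the realizable patterns against $\psi_n$ settles the tail. For $n<N_0$ the search is performed explicitly: a counter of polynomially many bits ranges over $n$, and for each $n$ the truth of $\psi_n$ is decided by finding the signs of $f_j(n)$; since each $\mu_\ell^n$ has a straight-line program of size $O(\log n)$ via repeated squaring and each $f_j$ has boundedly many terms, each such sign determination is a $\posslp$-style question on a succinctly presented sum of exponentials of algebraic numbers, solvable in $\PSPACE$, and the search reuses space across iterations, so the whole procedure runs in polynomial space.

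I expect the main obstacle to be the tail analysis when several of the $f_j$ lie simultaneously in the irrational-argument regime: since $\psi_n$ couples the signs of the $f_j$, it is not enough that each changes sign individually, and one needs quantitative control --- through Baker's theorem and equidistribution on the relevant subtorus --- guaranteeing that the orbit $(n\theta_\ell)_\ell$ visits every torus cell consistent with $\psi_n$ and that the lower-order terms cannot conspire to spoil a sign pattern near a cell boundary. The subsidiary but essential point for the $\PSPACE$ (rather than merely decidability) claim is to keep the explicit search within an at most exponential range of $n$ and to carry out each sign determination succinctly, which is precisely what root-separation bounds for algebraic numbers and the $\posslp$-style computations deliver.
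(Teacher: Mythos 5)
Your plan reaches the same endpoint as the paper --- a Boolean combination of sign conditions on exponential polynomials in $n$, decided by an effective threshold $N_0$ (via Baker-type lower bounds and algebraic separation bounds) for the tail, plus a \posslp-style search for $n<N_0$ --- but the \emph{reduction} to that form is genuinely different. The paper does not eliminate $x$ directly from $B_1x\le b_1\wedge B_2\A^nx\le b_2$. Instead it first classifies nonempty intersections of 3D polytopes into vertex-in-polytope and edge-meets-face events (Section~\ref{sec:polyhedra intersection}, using Lemma~\ref{lem:2d polytope decomposition} for unbounded faces), reduces to invertible $\A$ so that those events can be stated symmetrically (Section~\ref{sec:reduction to invertible}), and only then parametrizes an edge by $\lambda$ and a face by $(\mu,\nu)$, arriving at three scalar variables of which \emph{only $\lambda$ has an $n$-dependent coefficient}. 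Eliminating $\mu,\nu$ first (constant coefficients) and $\lambda$ last gives expressions of degree exactly two in $\alpha^n,\co\alpha^n,\rho^n$ (Theorem~\ref{thm:systems}), hence the quadratic dominant function $f(z)=Az^2+\co A\co z^2+Bz+\co B\co z+C$, with $\le 4$ zeros on the circle and the sharp Taylor bound of Lemma~\ref{lem:degree of taylor}. Your direct elimination of $x_1,x_2,x_3$, all three of which carry $n$-dependent coefficients through $B_2\A^n$, forces Fourier--Motzkin (or Renegar) to produce coefficients that are products of up to eight of the original ones, i.e.\ degree up to $8$ in $\alpha^n,\co\alpha^n,\rho^n$. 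That is still bounded, so the same machinery (trigonometric polynomials of fixed degree have boundedly many zeros of bounded multiplicity; Mignotte separation; Lemma~\ref{lem:Baker on unit circle}) should go through, at the cost of reproving a generalisation of Lemma~\ref{lem:degree of taylor} and tracking larger constants. What you buy is a cleaner front end --- no intersection-type casework, no Lemma~\ref{lem:2d polytope decomposition}, and no reduction to the invertible case; what the paper buys is the tight degree-2 structure that makes the analytic estimates short.

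Two small corrections. First, the parenthetical ``a careful choice of elimination order keeps the equality conditions confined to linear recurrences of order at most three, for which the Skolem problem is decidable'' is neither justified nor needed: since all three of $x_1,x_2,x_3$ have $n$-dependent coefficients, no elimination order avoids the degree blowup, so the resulting sequences have characteristic roots that are products of up to eight eigenvalues and hence are LRS of order well beyond three; and in any case the equalities do not require Skolem --- they are dispatched exactly as the strict inequalities, since Lemma~\ref{lem:main lemma} already shows the dominant function is nonzero for all $n>N$, so equalities can only hold for $n\le N$. Second, when writing $(f_1(n),\dots,f_k(n))$ you should note that for a $3\times 3$ real matrix there is at most one conjugate pair of complex eigenvalues, so all your angles $\theta_\ell$ are integer multiples of a single $\theta=\arg\alpha$; the ``subtorus'' is just the unit circle, and the coupled-sign concern you raise collapses to the single-circle density argument the paper actually uses in the proof of Theorem~\ref{thm:solve system}.
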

Note that as proved in~\cite{chonev2015polyhedron}, when the dimension is at least four, the polytope-collision problem becomes hard with respect to number-theoretic open problems. 

Before describing our approach, we explain why this result is somewhat surprising. Consider a simplification of 3DPCP, where the initial polytope $P$ is a segment between points $x$ and $y$, and we wish to decide whether the orbit of $P$ under the matrix $\A$ collides with another polytope $R$. We can represent $P$ as the single point $(x, y)$ in $\RR^6$, and extend $\A$ to a matrix $\B\in \RR^{6\times 6}$ that has two copies of $\A$ on its diagonal. Then, the orbit of $P$ under $\A$ corresponds to the orbit of $(x,y)$ under $\B$.  However, the respective target space in $\RR^6$ becomes the set of all points $(u,v)$ such that the line between $u$ and $v$ in $\RR^3$ intersects $R$. While this is a semi-algebraic set, it is quite complicated, and recall that the polytope hitting problem is already hard in dimension four. Thus, this approach suggests that the problem may be as hard as the hitting problem in $\RR^6$. 

Our approach to proving Theorem~\ref{thm:main PSPACE} is as follows. Observe that 3DPCP can be formulated as the problem of deciding whether there exists $n\in \NN$ such that $\A^nP_1$ intersects $P_2$ (where $\A^nP_1=\set{\A^nx:x\in P_1}$). In Section~\ref{sec:3DPHP to system} we reduce this formulation of 3DPCP to the problem of solving a system of inequalities, as we now describe.

In Section~\ref{sec:polyhedra intersection} we identify two types of intersection of 3D polytopes, namely (1) where a vertex of one polytope lies in the other polytope, and (2) where an edge of one polytope intersects a face of the other polytope. We show that under a certain representation, an intersection of polytopes is always of one of these types.
Note that while each of these types seems symmetric with respect to the two polytopes, in our setting the polytopes have an inherent asymmetry, as $\A^nP_1$ is dependent on $n$ whereas $P_2$ is not. 

In order to overcome this asymmetry, in Section~\ref{sec:reduction to invertible} we reduce 3DPCP to the case where the matrix $\A$ is invertible. Then, considering $\A^n P_1$ and $P_2$ is symmetric to considering $P_1$ and $(\A^{-1})^n P_2$. 

Next, in Section~\ref{sec:from invertible to equations} we observe that intersections of Type (1) can be decided using the work in \cite{chonev2015polyhedron}, and we are left to address intersections of Type (2). We formulate this type of intersection as $\exists n\in\NN\ \Phi(\alpha^n,\co{\alpha}^n,\rho^n)$, where $\Phi$ is a sentence in the existential first-order theory of real-closed fields, and $\alpha,\co{\alpha}$, and $\rho$ are the eigenvalues of the matrix $\A$, with $\alpha\in \AA\setminus\RR$ and $\rho\in \AA\cap \RR$ (the case where $\A$ has only real eigenvalues is simpler, and  we handle it in Appendix~\ref{apx:real eigen}). Moreover, $\Phi$ contains only linear expressions (with respect to its variables, where $n$ is treated as a constant), and at most three real variables. We proceed by eliminating the quantifiers from $\Phi$. 
We use the fact that the expressions in $\Phi(n)$ are linear to apply the simple Fourier-Motzkin quantifier-elimination algorithm~\cite{fourier1826solution}. We note that while other quantifier-elimination algorithms (e.g., \cite{renegar1992computational}) offer better asymptotic complexity, since the number of variables in $\Phi$ is constant, Fourier-Motzkin elimination takes polynomial time. Moreover, its simplicity allows us to keep track of the expressions in the quantifier free equivalent of $\Phi(n)$. Specifically, we show that this output consists of a disjunction of systems, where each system is a conjunction of expressions of the form 
\begin{equation}
\label{eq:intro expression}
A\alpha^{2n}+\co{A}\co{\alpha}^{2n}+ B \alpha^n\rho^n + \co{B}\co{\alpha}^n\rho^n + C\rho^{2n} + D|\alpha|^{2n}+ E\alpha^n + \co{E}\co{\alpha}^n+ F\rho^n + G\relBow 0
\end{equation} where $\mbox{$\bowtie$}\in\set{>,=}$.

Finally, Section~\ref{sec:solving the system} is the heart of our technical contribution, in which we show how to solve such systems. Intuitively, we normalize Expression~(\ref{eq:intro expression}) such that the maximal modulus of its terms is $1$, thus obtaining an expression of the form $A\gamma^{2n}+\co{A}\co{\gamma}^{2n}+ B \gamma^n + \co{B}\co{\gamma}^n + C + r(n)\relBow 0$ with $|\gamma|=1$ and $r(n)$ tending exponentially fast to $0$. We then consider two cases, depending on whether $\gamma$ is a root of unity or not. If $\gamma$ is a root of unity, we show that it is enough to consider polynomially many expressions with only real elements, which can be handled using relatively standard techniques. If $\gamma$ is not a root of unity, things are more involved. Then, by utilizing consequences of the Baker-W\"ustholz theorem~\cite{baker1993logarithmic}, we are able to show that the expression $|A\gamma^{2n}+\co{A}\co{\gamma}^{2n}+ B \gamma^n + \co{B}\co{\gamma}^n + C|$ is bounded away from $0$ by an inverse polynomial in $n$. Then, using a separation bound due to Mignotte~\cite{mignotte1983some}, we show that $r(n)$ decays fast enough to obtain a bound $N\in \NN$ such that $r(n)$ does not affect the sign of $A\gamma^{2n}+\co{A}\co{\gamma}^{2n}+ B \gamma^n + \co{B}\co{\gamma}^n + C$ for all $n>\NN$. Finally, since $\gamma$ is not a root of unity, it is dense in the unit circle, and we can replace the analysis of the former expression by analysis of the simpler function $f(z)=Az^{2}+\co{A}\co{z}^{2}+ B z + \co{B}\co{z}+C$ on the unity circle, from which we obtain our main result.
\section{Mathematical Tools}
In this section we introduce the key technical tools used in this paper.
\subsection{Algebraic numbers}
\label{sec:algebraic numbers}
For $p\in \ZZ[x]$ a polynomial with integer coefficients we denote by $\norm{p}$ the bit length of its representation as a list of coefficients encoded in binary. 
Note that the {\em degree} of $p$, denoted $\deg(p)$ is at most $\norm{p}$, and the {\em height} of $p$ --- i.e., the maximum of the absolute values of its coefficients, denoted $\heig{p}$ --- is at most $2^\norm{p}$.

We begin by summarising some basic facts about
algebraic numbers (denoted $\AA$) and their (efficient) manipulation.
The main references include~\cite{basu2005algorithms,cohen2013course, renegar1992computational}.
A complex number $\alpha$ is {\em algebraic} if it is a root of a single-variable polynomial with integer coefficients. The
{\em defining polynomial} of $\alpha$, denoted $p_\alpha$, is the unique polynomial of least degree, and whose coefficients do not have common factors, which vanishes at $\alpha$. The {\em degree} and {\em height} of $\alpha$ are respectively those of $p$, and are denoted $\deg(\alpha)$ and $\heig{\alpha}$. A standard representation\footnote{Note that this representation is not unique.} for algebraic numbers is to encode $\alpha$ as a tuple comprising its defining polynomial
together with rational approximations of its real and imaginary parts of sufficient precision to distinguish $\alpha$ from the other roots of $p_\alpha$. More precisely, $\alpha$ can be represented by $(p_\alpha, a, b, r)\in \ZZ[x]\times \QQ^3$ provided that $\alpha$ is the unique root of $p_\alpha$ inside the circle in $\CC$ of radius $r$ centred at $a + bi$. A  separation bound due to Mignotte~\cite{mignotte1983some} asserts that for roots $\alpha\neq \beta$ of a polynomial $p\in \ZZ[x]$,  we have
\vspace*{-2pt}
\begin{equation}
\label{eq:Mignotte}
|\alpha-\beta|>\frac{\sqrt{6}}{d^{(d+1)/2}H^{d-1}}
\end{equation}
where $d=\deg(p)$ and $H=\heig{p}$. Thus if $r$ is required to be less than a quarter of the root-separation bound, the representation is well-defined and allows for equality checking.
Given a polynomial $p\in \ZZ[x]$, it is well-known how to compute standard representations of each of its roots in time  polynomial in $\norm{p}$~\cite{basu2005algorithms,cohen2013course,pan1996optimal}. Thus given an algebraic number $\alpha$ for which we have (or wish to compute) a
standard representation, we write $\norm{\alpha}$ to denote the bit length of this representation. From now on, when referring to computations on algebraic numbers, we always implicitly refer to their standard representations.

Note that Equation~\ref{eq:Mignotte} can be used more generally to separate arbitrary algebraic numbers: indeed, two algebraic numbers $\alpha$ and $\beta$ are always roots of the polynomial $p_\alpha p_\beta$ of degree at most $\deg(\alpha)+\deg(\beta)$, and of height at most $\heig{\alpha}\heig{\beta}$. 
Given algebraic numbers $\alpha$ and $\beta$, one can compute
$\alpha+\beta$, $\alpha\beta$, $1/\alpha$ (for $\alpha\neq 0$), $\co{\alpha}$, and $|\alpha|$, all of which are algebraic, in time polynomial in $\norm{\alpha}+\norm{\beta}$. Likewise, it is straightforward to check whether $\alpha=\beta$.
Moreover, if $\alpha\in\RR$, deciding whether $\alpha>0$ can be done in time polynomial in $\norm{\alpha}$. Efficient  algorithms for all these tasks can be found in~\cite{basu2005algorithms,cohen2013course}.

\subsection{First-order theory of the reals}
\label{sec:first order theory}
Let $\overrightarrow{x}=x_1,\ldots, x_m$ be a list of $m$ real-valued variables, and let $\sigma(\overrightarrow{x})$ be a Boolean combination of atomic predicates of the form $g(\overrightarrow{x}) \relBow 0$, where each $g(\overrightarrow{x})\in \ZZ[x]$ is a polynomial with integer coefficients over these variables, and $\relBow\in \set{>,=}$. A {\em sentence of the first-order theory of the reals} is of the form 
$Q_1x_1 Q_2x_2\cdots Q_mx_m \sigma(\overrightarrow{x})$,
where each $Q_i$ is one of the quantifiers $\exists$ or $\forall$. Let us denote the above formula by $\tau$ , and write $\norm{\tau}$ to denote the bit length of its syntactic representation.
Tarski famously showed that the first-order theory of the reals is decidable~\cite{tarski1951decision}. His procedure, however, has non-elementary complexity. Many substantial improvements followed over the years, starting with Collins’s technique of cylindrical algebraic decomposition~\cite{collins1975quantifier}, and culminating with the fine-grained analysis of Renegar~\cite{renegar1992computational}. In this paper, we focus exclusively on the situation in which the number of variables is uniformly bounded.
\begin{theorem}[Renegar]
	\label{thm:renegar}
	Let $M\in \NN$ be fixed, let $\tau$ be of the form $Q_1x_1 Q_2x_2\cdots Q_mx_m \sigma(\overrightarrow{x})$. Assume that the number of 	variables in $\tau$ is bounded by $M$ (i.e., $m \le M$). Then the truth value of $\tau$ can be determined in time polynomial in $\norm{\tau}$.
\end{theorem}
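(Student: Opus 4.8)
The statement is the bounded-variable specialization of Renegar's quantifier-elimination algorithm for the first-order theory of the reals, so the plan is to recall the structure of that algorithm and observe that its complexity bound collapses to a polynomial once the total number of variables — and hence the number of quantifier blocks — is held fixed. Write $\tau = Q^{(1)}\vec{x}^{(1)}\cdots Q^{(\omega)}\vec{x}^{(\omega)}\,\sigma(\vec{x})$ after grouping consecutive like quantifiers into $\omega \le m \le M$ blocks, where block $i$ has $k_i$ variables with $\sum_i k_i = m$, and $\sigma$ is a Boolean combination of $s$ atoms $g_j \bowtie 0$ with $\deg g_j \le d$ and integer coefficients of bit length $\le L$; note $s,d,L \le \norm{\tau}$. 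The algorithm eliminates the blocks one at a time, from the innermost $Q^{(\omega)}$ outward: at stage $i$ it replaces $Q^{(i)}\vec{x}^{(i)}\,\psi_i(\vec{x}^{(i)};\vec{y})$ — with $\vec{y}$ the still-quantified outer variables, treated as formal parameters — by an equivalent quantifier-free formula $\psi_{i-1}(\vec{y})$ in the parameters, with a controlled increase of the triple $(s,d,L)$. After all $\omega$ stages one is left with a variable-free Boolean combination of sign conditions on integers, whose truth value (equal to that of $\tau$) is read off directly.

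The heart of a single block elimination is the critical-point (or ``ball'') method for producing parametric sample points. To handle $\exists \vec{x}^{(i)}\,\psi_i$ it suffices, for each candidate sign condition $\chi$ on the current polynomials, to exhibit finitely many points $\vec{x}^{(i)} = \xi(\vec{y})$ whose coordinates are algebraic over $\QQ(\vec{y})$ — given by a univariate representation whose coefficients are themselves polynomials in $\vec{y}$ of bounded degree and bit length — such that whenever the set $\{\vec{x}^{(i)} : \text{the polynomials realize } \chi \text{ at } (\vec{x}^{(i)},\vec{y})\}$ is nonempty it contains one of these points. Renegar obtains such a family by (i) an infinitesimal deformation of the polynomials (introducing an infinitesimal $\varepsilon$ and working over the real-closed field of Puiseux series) that puts them in general position and replaces the possibly singular, possibly unbounded realization sets by bounded smooth hypersurfaces, intersected with a sphere of infinitesimally large radius; and (ii) taking the critical points of a generic linear projection on those hypersurfaces, which form a zero-dimensional system describable by univariate representations, and then letting $\varepsilon \to 0$. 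Substituting all these parametric sample points into $\psi_i$ and disjoining over $\chi$ (a $\forall$-block is handled by dualizing) yields $\psi_{i-1}$.

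For the complexity bookkeeping, Renegar's analysis shows that eliminating a block of $k_i$ variables raises the degree bound to $d^{\prod_j O(k_j)}$, the number of polynomials to $s^{\prod_j O(k_j+1)}d^{\prod_j O(k_j)}$, and the coefficient bit length to $L\cdot d^{\prod_j O(k_j)}$, where the products/exponents range only over the $k_j$ and $\omega$. Since $m \le M$ forces $\omega \le M$, each $k_j \le M$, and $\prod_j k_j \le M^{M}$, every exponent occurring here is bounded by a constant depending only on the fixed $M$. Hence the final variable-free formula consists of at most $\mathrm{poly}(\norm{\tau})$ sign conditions $c \bowtie 0$ with $c \in \ZZ$ of bit length at most $\mathrm{poly}(\norm{\tau})$, it is produced in time polynomial in $\norm{\tau}$, and — being now just a Boolean combination of arithmetic comparisons on explicit integers — its truth value is evaluated in polynomial time.

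I expect the genuine difficulty to lie entirely in steps (i)–(ii): proving that the parametric sample points meet every realizable sign condition, which requires the careful passage to infinitesimals and back (limits of Puiseux series, elimination of $\varepsilon$), the verification that at least one critical point lands in every connected component of every realization set, and the bound on the algebraic complexity of the resulting univariate representations as the parameters $\vec{y}$ vary — together with the nontrivial fact that these costs compose over the $\omega$ successive eliminations with only the stated (singly-exponential-in-$\omega$) blow-up in the exponents, rather than a tower. These are precisely the contributions of Renegar's work, and for the present paper only the qualitative consequence stated in the theorem — polynomial time whenever $M$ is fixed — is needed.
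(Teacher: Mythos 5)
This theorem is not proved in the paper; it is imported verbatim from Renegar~\cite{renegar1992computational} and used purely as a black box, so there is no in-paper argument to compare yours against. Your sketch is a plausible high-level reconstruction of Renegar's block-elimination algorithm via parametric sample points (the critical-point/ball method with infinitesimal deformation), and the step that actually matters for the statement as used here --- that Renegar's bounds, which are of the shape $(sd)^{\prod_i O(k_i)}$ in the number of atoms $s$ and degree $d$ with exponents depending only on the block sizes $k_i$ and the number of blocks $\omega$, collapse to a fixed polynomial in $\norm{\tau}$ once $m\le M$ forces $\omega\le M$ and each $k_i\le M$ --- is exactly the right observation. You are also right that the genuine mathematical content (correctness of the sample-point construction, the passage to and from infinitesimals, and the fact that the degree/size blow-up composes singly-exponentially rather than as a tower over the $\omega$ eliminations) is entirely Renegar's, and the paper does not and need not reproduce it.
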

An important property of the first-order theory of the reals is that it admits {\em quantifier elimination}. That is, consider two lists of variables $\overrightarrow{x},\overrightarrow{y}$ and a sentence $Q_1x_1\cdots Q_mx_m  \sigma(\overrightarrow{x},\overrightarrow{y})$ with the variables of $\overrightarrow{y}$ being free, then there exists an (unquantified) sentence $\sigma'(\overrightarrow{y})$ such that for every assignment $\pi$ to the variables in $\overrightarrow{y}$ it holds that $\sigma'(\pi)$ is true iff $Q_1x_1\cdots Q_mx_m \sigma(\overrightarrow{x},\pi)$ is true. 

When the polynomials in $\sigma$ are all linear and the quantifiers are all existential, then quantifier elimination can be performed using the Fourier-Motzkin quantifier-elimination algorithm~\cite{fourier1826solution} (see Appendix~\ref{apx:quantifier elimination} for details). The benefit of this algorithm is its simplicity, which allows us to remove quantifiers symbolically.

We remark that algebraic constants can also be incorporated as coefficients in the first-order theory of the reals, as follows. Consider a polynomial $g(x_1,\ldots,x_m)$ with algebraic coefficients $c_1,\ldots, c_k$. We replace every $c_i$ with a new, existentially-quantified variable $y_i$, and add to the sentence the predicates $p_{c_i}(y_i)=0$ and $(y_i-(a+bi))^2< r^2$, where $(p_{c_i},a,b,r)$ is the representation of $c_i$. Then, in any evaluation of this formula to True, it must hold that $y_i$ is assigned value $c_i$.

\subsection{Polytopes and their representation}
\label{sec:tools polytopes}
A {\em polytope} $P$ in $\RR^3$ is an intersection of finitely many halfspaces in $\RR^3$: $P=\{x\in \RR^3: v_1^T x\ge c_1\wedge\ldots\wedge v_k^T x\ge c_k\}$ for vectors $v_1,\ldots,v_k\in \RR^3$ and numbers $c_1,\ldots,c_k\in \RR$. The {\em halfspace description} of $P$ is then $(v_1,c_1),\ldots,(v_k,c_k)$. When all entries are algebraic, we denote by $\norm{P}$ the description length.

The {\em dimension} of a polytope $P$, denoted $\dim(P)$, is the dimension of the subspace of $\RR^3$ spanned by $P$. The dimension of $P$ can be computed in time polynomial in $\norm{P}$ by solving polynomially many linear programs.
In $\RR^3$, the dimension of a polytope is in $\set{0,\ldots,3}$. A 2D boundry of a 3D polytope is a 2D polytope called a {\em face}. Similarly, the boundries of 2D polytopes (and in particular of faces) are called {\em edges}, and the boundries of edges are {\em vertices}. Every 3D polytope, except the trivial $\RR^3$ and $\emptyset$, has at least one face (but not necessarily edges or vertices). Since vertices and edges are crucial for our algorithms, we present the following lemma from~\cite{chonev2015polyhedron}
\begin{lemma}[\cite{chonev2015polyhedron} Lemma A.1]
	\label{lem:2d polytope decomposition}
	Suppose $P\subseteq \RR^3$ is a 2D polytope. Then $P=\bigcup_{i=1}^m A_i$, where $m$ is finite and each $A_i$ is of the form $A_i=\set{u_i+\alpha v_i+\beta w_i: T_i(\alpha,\beta)}$ where $u_i,v_i,w_i\in \RR^3$ and the predicates $T_i(\alpha,\beta)$ are from the following:
	\begin{itemize}
		\setlength\itemsep{-1pt}
		\item $T_i(\alpha,\beta)\equiv \alpha\ge 0 \wedge \beta\ge 0$ ($A_i$ is an infinite cone)
		\item $T_i(\alpha,\beta)\equiv \alpha \ge 0 \wedge \beta\ge 0 \wedge \alpha+\beta\le 1$ ($A_i$ is a triangle)
		\item $T_i(\alpha,\beta)\equiv \alpha\ge 0 \wedge \beta\ge 0 \wedge \beta \le 1$ ($A_i$ is an infinite strip)
	\end{itemize}	
	
	Furthermore, if we are given a halfspace description of $P$ with length $\norm{P}$, the size of the representation of each
	vector $u_i,v_i,w_i$ is at most $\norm{P}^{O(1)}$.
\end{lemma}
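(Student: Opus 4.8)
\emph{Reduction to the plane.} Since $\dim(P)=2$, the affine hull of $P$ is a plane, and from the halfspace description one can compute (by Gaussian elimination and a few linear programs, all of size $\norm{P}^{O(1)}$) a point $u\in P$ together with vectors $v,w\in\RR^3$ spanning the direction space of that plane. Substituting $x=u+sv+tw$ turns the defining inequalities of $P$ into linear inequalities in $(s,t)$, which cut out a full-dimensional convex polyhedron $Q\subseteq\RR^2$ of size $\norm{P}^{O(1)}$. The affine map $(s,t)\mapsto u+sv+tw$ sends a triangle, cone, or strip of the $(s,t)$-plane (in the sense of the three templates of the lemma) to a set of the required form in $\RR^3$, whose three defining vectors are fixed affine combinations of $u,v,w$ and the $(s,t)$-data, hence of size $\norm{P}^{O(1)}$. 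So it suffices to decompose $Q$ into finitely many triangles, cones, and strips with data of size $\norm{P}^{O(1)}$.

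\emph{Slab decomposition of $Q$.} If $Q$ has a nonzero lineality space, then it is a plane, a half-plane, or a slab between two parallel lines, and each of these is a union of at most four cones or two strips, written down explicitly. Otherwise $Q$ is pointed, so its recession cone contains no line; pick a direction $w$ (of size $\norm{P}^{O(1)}$) with $\RR w$ meeting the recession cone only at the origin, and use $w$ as the $t$-axis. Then for each $s$ in the projection interval $I$ of $Q$ onto the $s$-axis, the fibre $\{t:(s,t)\in Q\}$ is a bounded (possibly empty or degenerate) interval, so $Q=\{(s,t):s\in I,\ g(s)\le t\le h(s)\}$ with $g$ convex and $h$ concave, both piecewise-linear with breakpoints only at the $s$-coordinates of the finitely many vertices of $Q$. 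Cut $I$ at these breakpoints into finitely many subintervals, over each of which $g,h$ are affine. A bounded subinterval gives a (possibly degenerate) trapezoidal slice of $Q$, which splits into two triangles; an unbounded subinterval over which $g,h$ have equal slope gives a half-infinite strip, already of the required form; the only remaining case is an unbounded subinterval over which $g,h$ have distinct slopes, giving a region bounded by one segment and two rays of distinct slopes. All quantities produced (vertices, breakpoints, the vectors of the pieces) are solutions of polynomial-size linear systems, hence of size $\norm{P}^{O(1)}$, and the number of pieces is polynomial in $\norm{P}$.

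\emph{The remaining case, and the main obstacle.} Write such a piece as $\mathrm{conv}(A,B)+\mathrm{cone}(d_1,d_2)$, where $[A,B]$ is its bounded edge and $d_1,d_2$ are its two ray directions. I would cover it by the two cones $A+\mathrm{cone}(d_1,d_2)$ and $B+\mathrm{cone}(d_1,d_2)$ together with the triangle $\mathrm{conv}(A,B,C)$, where $C$ is the third vertex of the triangular region left uncovered by the two cones (an explicit point, rational in $A,B,d_1,d_2$). One checks that both cones lie inside the piece --- the key point being that $d_1$ and $d_2$ are recession directions of $Q$, so translating a point of the piece along them stays inside $Q$ --- and that their union misses exactly $\mathrm{conv}(A,B,C)$. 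Collecting the pieces over all subintervals decomposes $Q$, and hence $P$, as required, with all vectors of size $\norm{P}^{O(1)}$. The crux is precisely this handling of the unbounded pieces: a naive fan-triangulation from a vertex fails to reach the ``deep'' recession directions, so one must instead organize the decomposition so that every cone and strip it produces points along recession directions of $Q$ and therefore provably lies inside $P$. The slab decomposition is convenient for exactly this reason; the bounded part, by contrast, is a routine triangulation.
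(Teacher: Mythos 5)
The paper itself gives no proof of this lemma; it is quoted from \cite{chonev2015polyhedron} (their Lemma A.1) and used as a black box, so there is no in-paper argument to compare against. Judged on its own, your plan is correct. The reduction to a planar polyhedron $Q$ is routine; the slab decomposition correctly handles the pointed case by choosing the $t$-axis so that $\pm w$ avoids the recession cone, making every fibre bounded and trapping $Q$ between a convex piecewise-linear lower envelope $g$ and a concave upper envelope $h$; cutting $I$ at vertex $s$-coordinates then produces finitely many slabs, each a trapezoid (two triangles), a half-infinite strip (equal slopes), or a widening wedge (distinct slopes). Your cover of the wedge by the two apex cones $A+\mathrm{cone}(d_1,d_2)$ and $B+\mathrm{cone}(d_1,d_2)$ plus the residual triangle $\mathrm{conv}(A,B,C)$ is exactly right, and you have identified the essential subtlety: pieces produced by a naive fan triangulation from a vertex can escape $Q$, whereas your cones point along recession directions of $Q$ and therefore stay inside. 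All vertices, breakpoints, and ray directions you produce are solutions of constant-size linear systems over the halfspace data, so the $\norm{P}^{O(1)}$ bound on each $u_i,v_i,w_i$ follows, and your argument in fact gives a polynomial bound on $m$, which is stronger than the lemma (or the remark after it, which tolerates exponential $m$) requires. The only minor imprecision is the phrase ``misses exactly $\mathrm{conv}(A,B,C)$''---strictly, the two cones miss a proper subset of that triangle---but this does not affect the coverage argument.
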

Note that since the representation of $u_i,v_i$, and $w_i$ is polynomial, it follows that $m$ is at most exponential in $\norm{P}$, and moreover, that iterating over the sets $A_i$ can be done in $\PSPACE$.

\section{From 3DPCP to a System of Inequalities}
\label{sec:3DPHP to system}
In this section we reduce 3DPCP to the problem of solving a system of inequalities. More precisely, we show how to solve 3DPCP be solving an exponential number of systems of equalities and inequalities, and that iterating over these systems can be done in $\PSPACE$. In Section~\ref{sec:solving the system} we tackle the main technical challenge of solving each such system in $\PSPACE$, thus concluding the proof of Theorem~\ref{thm:main PSPACE}. 

As mentioned in Section~\ref{sec: problem intro}, we start by studying the intersection of polytopes.

\subsection{Intersection of polytopes}
\label{sec:polyhedra intersection}
Consider two intersecting polytopes $Q_1$ and $Q_2$ in $\RR^3$. In this section, we characterize the intersection of $Q_1$ and $Q_2$, which would later simplify the solution of 3DPCP. To illustrate the idea, assume that both $Q_1$ and $Q_2$ are bounded 3D polytopes. In this case, we can assume w.l.o.g.\ that $Q_1$ and $Q_2$ are both tetrahedra. Indeed, every bounded 3D polytope with $d$ vertices can be decomposed into a union of at most $\binom{d}{4}$ tetrahedra, and two such decompositions intersect iff two of the tetrahedra in the respective decompositions intersect. Under this assumption, there are two possible ``types'' of intersections: either $Q_1$ is contained in $Q_2$ (or vice-versa), or an edge of $Q_1$ intersects a face of $Q_2$ (or vice-versa). When the polytopes are bounded, we can relax the first requirement, and require instead that a vertex of $Q_1$ lies in $Q_2$ (or vice-versa).

In general, however, $Q_1$ or $Q_2$ may be unbounded. In this case we need to be slightly more careful. Indeed, as stated in Section~\ref{sec:tools polytopes}, unbounded polytopes might have no vertices or edges, but only faces (unless the polytope is $\RR^3$ or $\emptyset$, in which case the problem is trivial). For example, consider the case where $Q_1$ and $Q_2$ are infinite prisms. Then, it is possible that $Q_1\cap Q_2\neq \emptyset$ and neither are contained in each other, but no edge of $Q_1$ intersects a face of $Q_2$ (and vice-versa). 

Therefore, to get the above characterization for unbounded polytopes, we need to add ``fictive'' edges. Since we assume the input polytopes are non trivial, then each of them has at least one face, and recall that the faces of a 3D polytope are 2D polytopes. 
By employing Lemma~\ref{lem:2d polytope decomposition} on the faces of the polytopes, we get that each face of $Q_1$ and of $Q_2$ can be written as $\bigcup_{i=1}^mA_i$ as per Lemma~\ref{lem:2d polytope decomposition}. 
Observe that every set $A_i$ in the decomposition of Lemma~\ref{lem:2d polytope decomposition} has at least two edges and one vertex, and that a non-empty intersection $A_i\cap A'_j$ in such decompositions also intersects an edge of at least one of the two sets (the only involved case is the intersection of two infinite strips, where one should notice that the strips are only infinite to one side).

We conclude that the above characterization of the intersection of polytopes is correct also for unbounded ones. In the following, when we refer to a vertex/edge of an unbounded polytope, we mean the vertices and edges of the sets in the decomposition of Lemma~\ref{lem:2d polytope decomposition}. 

Thus, we have that $Q_1$ intersects $Q_2$ if at least one of the following holds:\\
1. There exists a vertex of $Q_1$ that is in $Q_2$. \hspace*{1cm}  3. An edge of $Q_1$ intersects a face of $Q_2$.\\
2. There exists a vertex of $Q_2$ that is in $Q_1$. \hspace*{1cm}  4. An edge of $Q_2$ intersects a face of $Q_1$.\\

\subsection{Reduction to the invertible case}
\label{sec:reduction to invertible}
In the notations of Section~\ref{sec:polyhedra intersection}, we wish to check the intersection of $Q_1=\A^n P_1$ and $Q_2=P_2$ for an existentially quantified $n\in \NN$. As mentioned in Section~\ref{sec: problem intro}, if $\A$ is invertible, then the problem is symmetric with respect to $Q_1$ and $Q_2$. Indeed, $\A^n P_1$ intersects $P_2$ iff $P_1$ intersects $(\A^{-1})^n P_2$. However, if $\A$ is not invertible, the problem is not clearly symmetric. 
In this section, we reduce 3DPCP to the case where $\A$ is an invertible matrix. 

Consider polytopes $P,R\subseteq \RR^3$, and let $\A\in (\AA\cap \RR)^{3\times 3}$ be a singular matrix, so $0$ is an eigenvalue of $\A$.
Consider first the case where the multiplicity of $0$ is $1$. 
Thus, we can write $\A=D^{-1} \begin{pmatrix}
0 & 0 \\
0 & B \\
\end{pmatrix} D$ where $D$ is an invertible matrix with real-algebraic entries, and $B\in (\AA\cap \RR)^{2\times 2}$. Indeed, if $\A$ has only real eigenvalues then this is achieved by converting $\A$ to Jordan form, and if $\A$ has complex eigenvalues $\alpha$ and $\co{\alpha}$, then this is achieved by setting $D=(v,u,w)$ where $v$ is an eigenvector corresponding to $0$, and $u+iw$ is an eigenvector corresponding to $\alpha$. In addition, $B$ is invertible, since its eigenvalues are the nonzero eigenvalues of $\A$.

In Appendix~\ref{apx:reduction to invertible}, we show that in this case, there exist polytopes $P',R'\subseteq \RR^2$ such that for every $n\ge 2$ the following holds: there exists $x\in P$ such that $\A^nx\in R$ iff there exists $x'\in P'$ such that $B^{n-1}x'\in R'$. Thus, it is enough to consider the polytopes $P',R'$ and the invertible matrix $B$. Moreover, we show that computing $P'$ and $R'$ can be done in polynomial time. We also show a similar approach can be taken when $0$ has multiplicity $2$ or $3$ (with the latter being trivial, since $\A$ is then nilpotent).


It should be noted that in the reduction above, even if the input had only rational entries, the output may still require a real-algebraic description. However, the degree and height of the algebraic numbers involved in the description of the output polytopes remain polynomial in the size of the input. 

Finally, we note that we can always {\em increase} the dimension of the problem while maintaining an invertible matrix. Indeed, Given a invertible matrix $B\in (\AA\cap \RR)^{2\times 2}$,
we can consider the invertible matrix $\begin{pmatrix}
1 & 0 \\
0 & B \\
\end{pmatrix}$,
and change $P,R\subseteq \RR^2$ to $\set{1}\times P,\set{1}\times R\subseteq \RR^3$ (and a similar approach when $B\in (\AA\cap \RR)^{1\times 1}$).
Thus, it is enough to solve the problem in the invertible case in dimension 3. 

\subsection{From the invertible case to an equation system}
\label{sec:from invertible to equations}
In this section we focus on solving 3DPCP in the invertible case. 

Let $P_1,P_2$ be the input polytopes (whose description may contain algebraic numbers, as per the reduction of Section~\ref{sec:reduction to invertible}), and let $\A\in (\AA\cap \RR)^{3\times 3}$ be an invertible matrix.
By Section~\ref{sec:polyhedra intersection}, and since $\A$ is invertible, it suffices to decide whether there exists a number $n\in \NN$ such that either there exists a vertex $x$ of $P_1$ with $\A^n x\in P_2$, or whether there exists an edge $e$ of $P_1$ such that $\A^n e$ intersects a face of $P_2$. Note that we may need to reverse the roles of $P_1$ and $P_2$, and use $\A^{-1}$ instead of $\A$. We remark that $\norm{\A^{-1}}$ is polynomial in $\norm{\A}$, and moreover --- since the eigenvalues of $\A^{-1}$ are inverses of those of $\A$ --- the description length of the eigenvalues of $\A^{-1}$ is equal to that of $\A$.

In \cite{chonev2015polyhedron}, the authors show that the problem of deciding, given a polyhedron $P$ in $\RR^3$, a vector $x\in\RR^3$, and a matrix $\A\in (\AA\cap \RR)^{3\times 3}$, whether there exists $n\in \NN$ such that $\A^n x\in P$ is solvable in \PSPACE. This solves the former case. It remains to solve the latter.

We thus assume that we are given as input a matrix $\A\in (\AA\cap \RR)^{3\times 3}$, an edge $E=\set{u+\lambda v: \lambda\in J}$ where $u,v\in \RR^3$ and $J$ is either $[0,1]$ or $[0,\infty)$, and a face $F=\set{s+\mu t+ \nu r: T(\mu,\nu)}$, where $s,t,r\in \RR^3$ and $T(\mu,\nu)$ is one of the following predicates (as per Lemma~\ref{lem:2d polytope decomposition}):
\begin{itemize}
\setlength\itemsep{-3pt}
\item $T(\mu,\nu)\equiv \mu\ge 0\wedge \nu\ge 0$
\item $T(\mu,\nu)\equiv \mu\ge 0\wedge \nu\ge 0\wedge \mu+\nu\le 1$
\item $T(\mu,\nu)\equiv \mu\ge 0\wedge \nu\ge 0\wedge \nu\le 1$
\end{itemize}

We wish to determine whether there exists a number $n$ and $x\in E$ such that $\A^n x\in F$. In the following, we will treat the case where $E=\set{u+\lambda v: \lambda\in [0,1]}$ and $F=\{s+\mu t+\nu r: \mu\ge 0\wedge \nu\ge 0\wedge \mu+\nu\le 1\}$. The other cases are slightly simpler, and can be solved {\em mutatis-mutandis}.

Consider the eigenvalues of $\A$. Since $\A$ is a $3\times 3$ invertible matrix, either all the eigenvalues are real, or there is one real eigenvalue $\rho$, and two complex, conjugate eigenvalues, $\alpha$ and $\co\alpha$. In the latter case, $\A$ is also diagonalizable. We consider here the latter case. In Appendix~\ref{apx:real eigen} we show how to handle the former case, which is easier.

Thus, let us assume that the eigenvalues of $\A$ are $\rho\in \AA\cap \RR$ and $\alpha,\co{\alpha}\in \AA$. We can compute an invertible matrix $B\in \AA^{3\times 3}$ such that $A=B^{-1}\begin{pmatrix}
\rho & 0 & 0\\
0 & \alpha & 0\\
0& 0 & \co{\alpha}\\
\end{pmatrix}B$, and the rows of $B$ are the respective eigenvectors. Note that if $w_\alpha$ is an eigenvector of $\alpha$, then $\co{w_\alpha}$ is eigenvector of $\co{\alpha}$, so we can write $B=\begin{pmatrix}
 w_\rho & w_{\alpha} & \co{w_\alpha}
\end{pmatrix}^T$. We now have that $\A^n= B^{-1}\begin{pmatrix}
\rho^n & 0 & 0\\
0 & \alpha^n & 0\\
0& 0 & \co{\alpha}^n\\
\end{pmatrix}B$ for every $n\in \NN$. By analyzing the structure of $B$ and $B^{-1}$, it is not hard to verify that every entry of $\A^n$ is a linear combination of $\alpha^n,\co{\alpha}^n$ and $\rho^n$ such that the coefficients of  $\alpha^n$ and $\co{\alpha}^n$ are conjugates, and the coefficient of $\rho^n$ is real. That is, 
for every $1\le i,j\le 3$ it holds that $(\A^n)_{i,j}=c_{i,j}\alpha^n+ \co{c_{i,j}}\,\co{\alpha}^n+d_{i,j}\rho^n$ for coefficients $c_{i,j}\in \AA$ and $d_{i,j}\in \AA\cap \RR$ (independent of $n$).

Consider a vector $x=u+\lambda v\in E$. We can write $\A^nx=\A^n u+\lambda \A^n v$, and observe that for $1\le i\le 3$ we have $(\A^nu)_i=(c_{i,1} u_1 +c_{i,2}u_2 +c_{i,3}u_3)\alpha^n+ \co{(c_{i,1} u_1 +c_{i,2}u_2 +c_{i,3}u_3)}\,\co{\alpha}^n+(d_{i,1}u_1+d_{i,2}u_2+d_{i,3}u_3)\rho^n$, and a similar structure holds for $\A^n v$. By renaming the coefficients, we can write
$(\A^n u+\lambda \A^n v)_i=f_i\alpha^n +\co{f_i}\, \co{\alpha}^n + g_i \rho^n + \lambda (h_i\alpha^n +\co{h_i}\, \co{\alpha}^n + k_i\rho^n)$ where $f_i,h_i\in \AA$ and $g_i,k_i\in \AA\cap\RR$ for $1\le i\le 3$.

We can now formulate the problem as follows: does there exists a number $n\in \NN$ such that the following first-order sentence is true: $\exists \lambda,\mu,\nu:0\le \lambda,\mu,\nu\le 1 \wedge \mu+\nu\le 1 \wedge$
\begin{equation}
\label{eq:quantified}
  \bigwedge_{i=1}^3 \left( f_i\alpha^n +\co{f_i}\, \co{\alpha}^n + g_i \rho^n + \lambda (h_i\alpha^n +\co{h_i}\, \co{\alpha}^n + k_i\rho^n)= s_i +\mu t_i +\nu r_i\right)
\end{equation}

As mentioned in Section~\ref{sec:first order theory}, we can convert (\ref{eq:quantified}) to an equivalent, quantifier-free sentence. Since our reasoning requires this equivalent sentence to have a special structure, we must explicitly remove the quantifiers. This is done in Appendix~\ref{apx:quantifier elimination} using Fourier-Motzkin quantifier elimination~\cite{fourier1826solution}, where we conclude the following.
\begin{theorem}
	\label{thm:systems}
	There exist constants $M,M'$ such that the sentence (\ref{eq:quantified}) is equivalent to a disjunction $\bigvee_{i=1}^M {\rm Sys}_i$ where for every $1\le i\le M$, $\rm Sys_i$ is a conjunction of at most $M'$ expressions of the form 
	\begin{equation}
	\label{eq: generic expression}
	A\alpha^{2n}+\co{A}\co{\alpha}^{2n}+ B \alpha^n\rho^n + \co{B}\co{\alpha}^n\rho^n + C\rho^{2n} + D|\alpha|^{2n}+ E\alpha^n + \co{E}\co{\alpha}^n+ F\rho^n + G\relBow 0
	\end{equation} where $\mbox{$\bowtie$}\in\set{>,=}$, $A,B,E\in \AA$, and $C,D,F,G\in \AA\cap\RR$.
	Moreover, the description of ${\rm Sys}$ is polynomial in $\norm{I}$.
\end{theorem}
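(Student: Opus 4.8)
\noindent\emph{Proof plan.} The plan is to obtain the disjunction $\bigvee_i\mathrm{Sys}_i$ by eliminating $\exists\lambda,\mu,\nu$ from (\ref{eq:quantified}) with the Fourier--Motzkin procedure~\cite{fourier1826solution}, chosen precisely because it lets us keep every intermediate expression in a controlled syntactic form. Write $a=\alpha^n$, $b=\co\alpha^n$, $c=\rho^n$ and treat them as formal parameters; then every atomic predicate of (\ref{eq:quantified}) is linear in $\lambda,\mu,\nu$ with coefficients that are \emph{affine} (degree at most $1$) in $a,b,c$, and each equality is replaced by a pair of inequalities. Everything hinges on the order of elimination: I would eliminate $\mu$, then $\nu$, then $\lambda$.

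Eliminating $\mu$ and then $\nu$ does not raise the degree in $a,b,c$. Indeed, in (\ref{eq:quantified}) the variable $\mu$ occurs only with the \emph{constant} coefficients $-t_i$ (in the three equations) and $\pm1$ (in $0\le\mu\le1$ and $\mu+\nu\le1$); a Fourier--Motzkin step replaces a pair of constraints by a linear combination whose multipliers are exactly these constant $\mu$-coefficients, so the result is again linear in the remaining variables with $a,b,c$-coefficients of degree at most $1$, and the coefficients of $\nu$ remain constant. The same applies when eliminating $\nu$. After this, every surviving constraint has the shape $q_0+\lambda q_1\relBow 0$ with $q_0,q_1$ affine in $a,b,c$. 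Now eliminate $\lambda$: since $q_1$ is no longer constant, its sign is not determined a priori, so for each such constraint I would branch into the three cases $q_1>0$, $q_1<0$, $q_1=0$, recording the chosen case as an extra atomic predicate --- which, being affine in $a,b,c$, is already of the form (\ref{eq: generic expression}) (with $A=B=C=D=0$). Inside a branch, eliminating $\lambda$ pairs an upper bound $\lambda\le -q_0/q_1$ (from a constraint with $q_1>0$) against a lower bound $\lambda\ge -q_0'/q_1'$ (from one with $q_1'<0$) and yields $q_0'q_1-q_0q_1'\relBow 0$, a product of two affine expressions in $a,b,c$, hence of total degree at most $2$. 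The monomials that can occur are exactly $\alpha^{2n},\co\alpha^{2n},\rho^{2n},\alpha^n\co\alpha^n=|\alpha|^{2n},\alpha^n\rho^n,\co\alpha^n\rho^n,\alpha^n,\co\alpha^n,\rho^n,1$, which is precisely the list appearing in (\ref{eq: generic expression}).

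To pin down the reality/conjugacy constraints on the coefficients, I would carry along the involution $*$ on expressions in $a,b,c,\lambda,\mu,\nu$ that swaps $a\leftrightarrow b$, conjugates every algebraic coefficient, and fixes $c,\lambda,\mu,\nu$ and every real constant. Every atomic predicate of (\ref{eq:quantified}) is $*$-invariant --- $f_ia+\co{f_i}b$, $g_ic$, the constants $s_i,t_i,r_i$, and $0\le\lambda,\mu,\nu\le1$ all are --- and $*$-invariance is preserved by negation, by linear combinations with real multipliers, by multiplication, and by the sign-branch predicates (whose defining $q_1$ are $*$-invariant). Hence every expression produced by the elimination is $*$-invariant, and a $*$-invariant expression of total degree at most $2$ in $a,b,c$ is exactly one of the form (\ref{eq: generic expression}) with $A,B,E\in\AA$, $C,D,F,G\in\AA\cap\RR$, and the coefficients of $\co\alpha^{2n},\co\alpha^n\rho^n,\co\alpha^n$ equal to $\co A,\co B,\co E$. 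For the quantitative claims: (\ref{eq:quantified}) has $O(1)$ atomic predicates in three variables, each Fourier--Motzkin step at most squares the number of constraints, and there are only $O(1)$ sign branches, so the output is a disjunction of $M=O(1)$ conjunctions of $M'=O(1)$ predicates; expressing each non-strict inequality $g\ge 0$ as $g>0\vee g=0$ keeps $M$ constant and makes $\relBow\in\set{>,=}$ as required; and each coefficient $A,\dots,G$ is obtained from the input algebraic numbers by a bounded-depth sequence of additions, multiplications, conjugations and divisions by some $t_i$ or $r_i$, so by the efficient algebraic-number arithmetic of Section~\ref{sec:algebraic numbers} its description length is polynomial in $\norm I$, giving the stated size bound. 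The remaining shapes of $E$ and $F$ merely delete or substitute some of the constraints $\lambda\le1$, $\mu+\nu\le1$, $\nu\le1$ and are treated identically.

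The main obstacle is exactly this ordering-and-symmetry bookkeeping. One must eliminate the unique variable whose coefficients involve $a,b,c$ (namely $\lambda$) last and only once: a second such elimination would multiply two degree-$\le2$ expressions and push us out of the form (\ref{eq: generic expression}). And one must verify that the conjugation symmetry $*$ really does survive the step in which a constraint is multiplied through by the sign-indeterminate expression $q_1$ --- which is why the sign branches are recorded as predicates on the $*$-invariant $q_1$ rather than treated externally. Granting those two points, the constant bounds on $M,M'$ and the polynomial bound on the description length are routine consequences of the tools recalled in Section~\ref{sec:algebraic numbers}.
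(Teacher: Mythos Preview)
Your proposal is correct and follows essentially the same approach as the paper: both eliminate the two variables with constant (i.e., $n$-independent) coefficients first via Fourier--Motzkin, then eliminate $\lambda$ last with a three-way sign branch on its coefficient, obtaining products of two affine expressions in $\alpha^n,\co\alpha^n,\rho^n$. Your use of the conjugation involution $*$ to certify the reality/conjugacy pattern of the coefficients is a clean abstraction of what the paper does by direct inspection of the explicit product, and the order $\mu,\nu$ versus the paper's $\nu,\mu$ is immaterial since both have constant coefficients.
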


\section{Solving the System}
\label{sec:solving the system}
This section constitutes the main technical challenge of the paper, namely to decide whether there exists $n\in \NN$ such that the disjunction presented in Theorem~\ref{thm:systems} is true. We refer to such an $n$ as a {\em solution} for the disjunction.

We first note that it is enough to consider each system in the disjunction separately. Indeed, since 
the number of systems is bounded, independent of the input,
we can try to solve each one separately. Our goal is then to decide, given a system {\rm Sys} of expressions as per Theorem~\ref{thm:systems}, whether there exists a solution $n\in \NN$ that satisfies all the expressions simultaneously. 

We divide our analysis to two cases. First we handle the (straightforward) case where $\frac{\alpha}{|\alpha|}$ is a root of unity. We then proceed to consider the more involved case, where $\frac{\alpha}{|\alpha|}$ is not a root of unity.

\subsection{The case where $\frac{\alpha}{|\alpha|}$ is a root of unity}
\label{sec:root of unity}
Suppose that $\frac{\alpha}{|\alpha|}$, denoted $\gamma$, is a root of unity. We can now treat $(\ref{eq: generic expression})$ as
$$|\alpha|^{2n}A\gamma^{2n}+|\alpha|^{2n}\co{A}\co{\gamma}^{2n}+ |\alpha|^{n}B\gamma^n\rho^n + |\alpha|^{n}\co{B}\co{\gamma}^n\rho^n + C\rho^{2n} +  D|\alpha|^{2n}+ |\alpha|^n E\gamma^n + |\alpha|^n\co{E}\co{\gamma}^n+ F\rho^n + G\relBow  0$$
Let $d$ be the order of $\gamma$, then $\gamma^2$ is also a root of unity of order at most $d$. Thus, there are at most $d^2$ possible values for $(\gamma^n,\gamma^{2n})$, determined by the pair $(n\bmod d,2n \bmod d)$. 
We can now treat the expression as $d^2$ expressions of real-algebraic sums of exponentials. We show that $d\le \deg(\gamma)^2$, so these can be solved in $\PSPACE$ using standard techniques of asymptotic analysis, by considering the coefficients and the moduli of $\alpha$ and $\rho$ (see Appendix~\ref{apx:root of unity} for details).

\subsection{The case where $\frac{\alpha}{|\alpha|}$ is not a root of unity}
\label{sec: not root of unity}
When $\gamma=\frac{\alpha}{|\alpha|}$ is not a root of unity, things are more involved. Nonetheless, we prove the following theorem.
\begin{theorem}
	\label{thm:solve system}
	The problem of deciding whether a system {\rm Sys} of expressions of the form $(\ref{eq: generic expression})$ has a solution, is in \PSPACE.
\end{theorem}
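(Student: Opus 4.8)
The plan is to prove Theorem~\ref{thm:solve system} in three stages: first normalise each expression so that its sign is governed by a fixed degree-two trigonometric polynomial in $\gamma^n$ plus an exponentially small remainder; then establish, via Baker--W\"ustholz, that this trigonometric polynomial is either zero or inverse-polynomially far from zero; and finally decide the system by a density argument that replaces ``$\exists n$'' by a first-order query about the unit circle, treating a polynomially bounded set of exceptional indices separately. For the \emph{normalisation}: every term of~(\ref{eq: generic expression}) has modulus $c^n$ for some $c\in\set{|\alpha|^2,|\alpha||\rho|,|\rho|^2,|\alpha|,|\rho|,1}$, so dividing by $c_*^n$ (with $c_*$ the largest of these) and splitting into a bounded number of subcases according to the parity of $n$ (i.e.\ the sign of $\rho^n$) and to which terms attain modulus $c_*^n$, each expression becomes $f(\gamma^n)+r(n)\relBow 0$, where $\gamma=\alpha/|\alpha|$ lies on the unit circle and is not a root of unity (our standing assumption), $f(z)=Az^{2}+\co{A}\co{z}^{2}+Bz+\co{B}\co{z}+C$ with $A,B$ algebraic and $C$ real (so $f$ is real-valued on the unit circle), and $r(n)=\sum_\ell c_\ell\mu_\ell^n$ is a finite sum of exponentials with the $\mu_\ell$ algebraic of modulus $<1$. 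When the putative dominant part $f$ vanishes identically, $f(\gamma^n)+r(n)$ is again of the form~(\ref{eq: generic expression}) with strictly fewer surviving moduli, so one recurses with bounded depth; and the Mignotte bound~(\ref{eq:Mignotte}) applied to the algebraic numbers $1-|\mu_\ell|$ gives a computable $\lambda<1$ and $K$ with $|r(n)|\le K\lambda^n$, all data remaining of polynomial size.

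The heart of the argument is a separation bound for $f(\gamma^n)$. Expanding $f(\gamma^n)$ in $\gamma^n$ and $\co\gamma^n$, it vanishes precisely when $\gamma^n$ equals one of the at most four unit-modulus roots of $f$, which are algebraic of bounded degree and polynomial height. When $f(\gamma^n)\ne 0$ we want $|f(\gamma^n)|\ge 1/q(n)$ for a computable polynomial $q$; one obtains this by relating $|f(\gamma^n)|$ to the distance of $n\arg\gamma$ from the arguments of the roots of $f$, and bounding that distance from below via the Baker--W\"ustholz theorem~\cite{baker1993logarithmic} applied to a linear form $n\log\gamma-\log z_0-2\pi i k$ in logarithms of algebraic numbers. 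The same circle of ideas shows that the \emph{resonant} indices, namely those $n$ for which $\gamma^n$ is a unit-modulus root of some $f$, are not merely finite but bounded by a polynomial in $\norm{{\rm Sys}}$: each such identity is a nontrivial multiplicative relation between $\gamma$ and an algebraic number of polynomial height, and $\gamma$ is not a root of unity.

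It remains to combine these ingredients. From the separation bound there is a computable threshold $N$ such that for every non-resonant $n>N$ and every expression in the system, $r(n)$ cannot flip the sign of $f(\gamma^n)$, so $\mathrm{sign}(\mathrm{expr}(n))=\mathrm{sign}(f(\gamma^n))$. Since $\gamma$ is not a root of unity, $\set{\gamma^n : n>N,\ n\equiv R\pmod{2}}$ is dense in the unit circle for each $R$, so a subsystem of strict inequalities $\bigwedge_i f_i(\gamma^n)>0$ has arbitrarily large solutions of a fixed parity iff $\set{z : |z|=1,\ \bigwedge_i f_i(z)>0}\ne\emptyset$, which is an existential sentence over the reals in the two variables $\re z,\im z$ with algebraic coefficients and is decidable in $\PSPACE$ (indeed in polynomial time) by Theorem~\ref{thm:renegar}. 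Equality constraints are handled analogously: unless $f_i\equiv 0$, the equation $\mathrm{expr}_i(n)=0$ has only polynomially bounded solutions (again by the separation bound), while $f_i\equiv 0$ is absorbed into the recursion of the first stage. Finally, the resonant indices and those below the small thresholds are polynomially bounded, so they are checked directly, and for each such $n$ one either knows $\gamma^n$ exactly or $n$ is small, so no arithmetic is ever performed on algebraic numbers of super-polynomial bit-length. As normalisation yields only $O(1)$ subcases, each disposed of by a constant number of $\PSPACE$ queries, the whole procedure runs in $\PSPACE$.

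The step I expect to be most delicate is this last one. The threshold $N$ is in general exponential in $\norm{{\rm Sys}}$, since the Mignotte bound only guarantees $1-\lambda\ge 2^{-\mathrm{poly}(\norm{{\rm Sys}})}$, so one cannot afford to enumerate the indices up to $N$. The entire difficulty is to argue that, outside the polynomially bounded regime of resonant and small indices, satisfiability of the system is determined purely by the behaviour of the $f_i$ on the unit circle --- and it is exactly the Baker--W\"ustholz separation bound, together with the Mignotte control on $r(n)$, that makes this dichotomy hold and effective.
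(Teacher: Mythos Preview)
Your normalisation and separation-bound steps track the paper's Lemma~\ref{lem:main lemma} closely, and your density argument for the large-$n$ regime is correct. The gap is in the final stage, where you claim that only polynomially many indices need to be checked directly.

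You correctly note that the threshold $N$ beyond which $\mathrm{sign}(\mathrm{expr}(n)) = \mathrm{sign}(f(\gamma^n))$ is exponential in $\norm{{\rm Sys}}$, and that one cannot afford to enumerate up to $N$. But your proposed resolution --- that ``outside the polynomially bounded regime of resonant and small indices, satisfiability of the system is determined purely by the behaviour of the $f_i$ on the unit circle'' --- does not follow from Baker--W\"ustholz and Mignotte. For $n$ between your polynomial threshold and the exponential $N$, Baker--W\"ustholz gives only $|f(\gamma^n)| \ge 1/p(n)$, while $|r(n)| \le K\lambda^n$ with $1-\lambda$ possibly as small as $2^{-\mathrm{poly}(\norm{{\rm Sys}})}$; hence $|r(n)|$ need not drop below $1/p(n)$ until $n$ is of order $N$, and throughout that intermediate range $r(n)$ can flip the sign of $f(\gamma^n)$. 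Concretely, even when $\bigcap_i \{z:|z|=1,\, f_i(z)>0\}=\emptyset$, the system may still be satisfied at some $n$ in this range because the remainders $r_i(n)$ simultaneously push several $\mathrm{expr}_i(n)$ above zero. The same error affects your treatment of equalities: solutions of $\mathrm{expr}_i(n)=0$ are bounded only by the exponential $N$, not by a polynomial.

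The paper does not avoid the exponential range. It nondeterministically guesses $n \le N$ (so $n$ has polynomial bit-length) and observes that for each such $n$ every expression in the system can be represented as an arithmetic circuit of polynomial size, so that deciding its sign is an instance of \posslp. This places the residual check in \NPposslp, which lies in the counting hierarchy and hence in $\PSPACE$. Without this arithmetic-circuit device (or an equivalent), the exponential search over $n\le N$ cannot be brought within $\PSPACE$, and your argument as written does not close the gap.
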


Before proving the theorem, we need some definitions. 
In the following, we assume w.l.o.g.\ that $\rho>0$. Indeed, if $\rho<0$ then we can divide into two cases according to the parity of $n$, and solve each separately (note that $\rho\neq 0$ since the matrix $\A$ is invertible).

For an expression of the form $(\ref{eq: generic expression})$, we obtain its {\em normalized expression} by dividing it by $(\max\{|\alpha|^2,|\alpha|\rho,\rho^2,|\alpha|,|\rho|\})^n$ (and such that the coefficient of the element we divide by is nonzero). Thus, the normalized expression is of the form
\begin{equation}
\label{eq:normalized}
A\gamma^{2n}+\co{A}\co{\gamma}^{2n}+ B \gamma^n + \co{B}\co{\gamma}^n + C + r(n)\relBow 0,
\end{equation}
with $\gamma\in \AA$ such that $|\gamma|=1$ and $\gamma$ is not a root of unity, $A,B\in \AA$ and $C\in \AA\cap \RR$ are not all $0$, and
$r(n)=\sum_{l=1}^m D_l \beta^n_l+ \co{D_l}\co \beta^n_l$, where $|\beta_l|<1$ for every $1\le l\le m$, and $0\le m\le 4$ (note that for uniformity we treat real numbers in $r(n)$ as a sum of complex conjugates). 
For every $1\le l\le m$, $\beta_l$ is a quotient of two elements from the set $\set{\alpha,\alpha^2,\rho,\rho^2,\alpha\rho}$. Since $\alpha$ and $\rho$ are eigenvalues of $\A$, $\deg(\alpha),\deg(\rho)$ are $\norm{\A}^{O(1)}$. Thus, by Section~\ref{sec:algebraic numbers}, $\deg(\beta_l)=\norm{\A}^{O(1)}$, and $\heig{\beta_l}=2^{\norm{\A}^{O(1)}}$.

Since $\gamma$ is not a root of unity, then $\set{\gamma^n:n\in \NN}$ is dense in the unit circle. With this motivation in mind, we define, for a normalized expression, its {\em dominant function} $f:\CC\to\RR$ as $f(z)=Az^2+\co{A}\co{z}^{2}+ B z + \co{B}\co{z} + C$. Observe that $(\ref{eq:normalized})$ is now equivalent to $f(\gamma^n)+r(n)\relBow 0$.

The following lemma is our main technical tool in proving Theorem~\ref{thm:solve system}.

	\begin{lemma}
		\label{lem:main lemma}

		Consider a normalized expression as in $(\ref{eq:normalized})$. Let $\norm{I}$ be its encoding length, and let $f$ be its dominant function.
		Then there exists $N\in \NN$ computable in polynomial time in $\norm{I}$ with $N=2^{\norm{I}^{O(1)}}$
		such that for every $n>N$ it holds that 
		\begin{enumerate}
			\setlength\itemsep{-3pt}
			\item $f(\gamma^n)\neq 0$,
			\item $f(\gamma^n)>0$ iff $f(\gamma^n)+r(n)>0$,
			\item $f(\gamma^n)<0$ iff $f(\gamma^n)+r(n)<0$.
		\end{enumerate}
	\end{lemma}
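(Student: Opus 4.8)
The proof splits into two quantitative estimates that pull in opposite directions: a \emph{lower bound} showing that $|f(\gamma^n)|$ cannot be too small when it is nonzero, and an \emph{upper bound} showing that $|r(n)|$ decays exponentially fast. Combining them yields the threshold $N$.

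First I would establish the decay of $r(n)$. Since $r(n)=\sum_{l=1}^m (D_l\beta_l^n+\co{D_l}\co{\beta_l}^n)$ with each $|\beta_l|<1$, we certainly have $|r(n)|\le 2m\,\max_l|D_l|\cdot\max_l|\beta_l|^n$, so the issue is to bound $\max_l|\beta_l|$ away from $1$ by an amount that is at worst inverse-exponential in $\norm{I}$. This is exactly what Mignotte's separation bound~(\ref{eq:Mignotte}) gives: $1-|\beta_l|$ is the distance between the algebraic numbers $|\beta_l|$ and $1$, and since $\deg(\beta_l)$ and hence $\deg(|\beta_l|)$ are $\norm{\A}^{O(1)}$ while $\heig{\beta_l}=2^{\norm{\A}^{O(1)}}$ (as recorded after~(\ref{eq:normalized})), we get $1-|\beta_l|\ge 2^{-\norm{I}^{O(1)}}$, whence $|\beta_l|^n\le e^{-n\cdot 2^{-\norm{I}^{O(1)}}}$. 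Thus $|r(n)|\le 2^{\norm{I}^{O(1)}}\cdot e^{-n\cdot 2^{-\norm{I}^{O(1)}}}$, which is a routine (if clumsy) estimate.

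The substantive step is the lower bound on $|f(\gamma^n)|$, and this is where the Baker--W\"ustholz theorem enters. Write $\gamma = e^{i\theta}$; since $\gamma$ is an algebraic number of modulus $1$ that is not a root of unity, $\theta/\pi$ is irrational, and more importantly the linear forms in logarithms governing how close $\gamma^n$ can come to any fixed algebraic point on the unit circle are effectively bounded below. Concretely, one shows that $|f(\gamma^n)|$ is controlled by the distance from $\gamma^n$ to the (finitely many, algebraic) zeros of the trigonometric-polynomial-type function $f$ on the unit circle; near each such zero $f$ vanishes to order at most two, so $|f(\gamma^n)|\gtrsim (\text{dist}(\gamma^n,\text{zero}))^2$, and a Baker-type bound shows this distance is at least $n^{-\norm{I}^{O(1)}}$ (an inverse polynomial in $n$, with the degree of the polynomial depending on $\norm{I}$). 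Hence $|f(\gamma^n)|\ge n^{-\norm{I}^{O(1)}}$ whenever it is nonzero; and a separate argument --- again via irrationality of $\theta/\pi$ and the fact that the zero set of $f$ on the circle is a proper algebraic subset --- shows $f(\gamma^n)\neq 0$ for all sufficiently large $n$, giving conclusion~(1). I expect this to be the main obstacle: correctly setting up the linear form in logarithms (the argument of $\gamma^n$ versus the arguments of the algebraic zeros of $f$) and extracting a clean inverse-polynomial bound with polynomial-in-$\norm{I}$ exponent requires care, and the degenerate sub-cases --- $f$ identically constant, double zeros, zeros of $f$ that happen not to lie on the unit circle --- must each be checked.

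Finally, conclusions~(2) and~(3) follow by comparison: for $n>N$ we need $|r(n)| < |f(\gamma^n)|$, i.e.
\begin{equation*}
2^{\norm{I}^{O(1)}}\cdot e^{-n\cdot 2^{-\norm{I}^{O(1)}}} \;<\; n^{-\norm{I}^{O(1)}},
\end{equation*}
and since the left-hand side decays exponentially while the right-hand side decays only polynomially, this inequality holds for all $n$ beyond an explicit threshold $N=2^{\norm{I}^{O(1)}}$, which can be computed in polynomial time by solving the corresponding transcendental inequality up to the nearest integer (taking logarithms reduces it to comparing a linear function of $n$ with $\log n$ times a constant). That $N$ is simultaneously an upper bound for the "sufficiently large" in conclusion~(1) is arranged by taking the maximum of the two thresholds. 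This wrap-up is bookkeeping once the two estimates are in hand.
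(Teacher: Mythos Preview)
Your plan matches the paper's approach closely: both proceed by (i) bounding $|r(n)|$ above via Mignotte's separation bound applied to $1-|\beta_l|$, (ii) bounding $|f(\gamma^n)|$ below by locating the finitely many zeros $z_1,\ldots,z_4$ of $f$ on the unit circle, invoking the Baker--W\"ustholz consequence (Lemma~\ref{lem:Baker on unit circle}) to get $|\gamma^n-z_j|>n^{-k^D}$, and translating this via a local Taylor expansion into an inverse-polynomial lower bound on $|f(\gamma^n)|$, and (iii) comparing the exponential decay against the inverse polynomial.

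Two points where your plan is thinner than the paper and would need work. First, the order of vanishing of $g(x)=f(e^{ix})$ at a root is at most \emph{three}, not two; the paper proves this (Lemma~\ref{lem:degree of taylor}) by writing the system $g(\phi)=g'(\phi)=g''(\phi)=g^{(3)}(\phi)=0$ as two $2\times 2$ linear systems in $(\cos(2\phi+\theta_A),\cos(\phi+\theta_B))$ and $(\sin(2\phi+\theta_A),\sin(\phi+\theta_B))$ with invertible coefficient matrices, forcing a contradiction. Your assertion of order $\le 2$ is unjustified, though any absolute bound suffices for the strategy. Second, and more substantially, you underestimate the work needed to make the local analysis \emph{effective}: the roots $\phi_j=\arg(z_j)$ and the Taylor coefficients $g^{(d_j)}(\phi_j)$ are transcendental, so one cannot directly compute the radius of the neighbourhood on which the Taylor approximation is valid. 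The paper devotes a separate lemma (Lemma~\ref{lem:epsilon computable}) to showing that an $\epsilon_1$ with $1/\epsilon_1=2^{\norm{I}^{O(1)}}$ can be computed in polynomial time, by reducing the monotonicity and approximation conditions back to separation of algebraic points on the circle; this, rather than the Baker bound itself, is where most of the technical bookkeeping lives.
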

	In particular, the lemma implies that if $f(n)+r(n)=0$, then $n\le N$. 
	The proof of Lemma~\ref{lem:main lemma} relies on the following lemma from~\cite{ouaknine2014ultimate}, which is itself a consequence of the Baker-W\"ustholz Theorem~\cite{baker1993logarithmic}.
	\begin{lemma}[\cite{ouaknine2014ultimate}]
		\label{lem:Baker on unit circle}
		There exists $D\in \NN$ such that for all algebraic numbers $\zeta,\xi$ of modulus 1, and for every $n\ge 2$, if $\zeta^n\neq \xi$, then $|\zeta^n-\xi|>\frac{1}{n^{(\norm{\zeta}+\norm{\xi})^D}}$.
	\end{lemma}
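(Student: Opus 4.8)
The plan is to deduce this Baker-type gap bound from the Baker--W\"ustholz theorem on linear forms in logarithms of algebraic numbers, via the standard reduction of a chordal-distance estimate on the unit circle to such a form.

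First I would convert the multiplicative quantity $|\zeta^n-\xi|$ into an additive one. Write $\zeta=e^{i\theta}$ and $\xi=e^{i\psi}$ with $\theta,\psi\in(-\pi,\pi]$, so that $\log\zeta=i\theta$ and $\log\xi=i\psi$ are the principal logarithms; for $n\ge 2$ let $k\in\ZZ$ be the integer nearest to $(n\theta-\psi)/(2\pi)$ and set
\[
\Lambda \;=\; n\log\zeta \;-\; \log\xi \;-\; 2k\log(-1)\;=\;i\,(n\theta-\psi-2\pi k),
\]
using $\log(-1)=i\pi$ (collecting coincident bases first in the degenerate cases $\xi=-1$ or $\zeta=\xi$). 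By the choice of $k$ we have $|n\theta-\psi-2\pi k|\le\pi$, hence, using $|e^{it}-1|=2|\sin(t/2)|\ge\frac{2}{\pi}|t|$ for $|t|\le\pi$ together with $2\pi$-periodicity,
\[
|\zeta^n-\xi|\;=\;\bigl|e^{i(n\theta-\psi)}-1\bigr|\;=\;2\bigl|\sin\bigl(\tfrac{n\theta-\psi-2\pi k}{2}\bigr)\bigr|\;\ge\;\tfrac{2}{\pi}\,|\Lambda|.
\]
Moreover $\Lambda=0$ would force $e^{i(n\theta-\psi)}=1$, i.e.\ $\zeta^n=\xi$, so the hypothesis $\zeta^n\ne\xi$ gives $\Lambda\ne0$. (If $\zeta$ is a root of unity the claim is immediate, since then $\zeta^n$ ranges over a fixed finite set and $|\zeta^n-\xi|$ is bounded below by a positive constant depending only on $\zeta,\xi$; so one may also assume $\zeta$ is not a root of unity, although this is not strictly needed.)

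Next I would apply Baker--W\"ustholz to the nonzero three-term form $\Lambda=n\log\zeta+(-1)\log\xi+(-2k)\log(-1)$. With $m=3$ logarithms, $d=[\QQ(\zeta,\xi):\QQ]$, and $B=\max(n,2|k|,e)$, the theorem yields
\[
\log|\Lambda|\;>\;-\,C(3,d)\cdot h'(\zeta)\,h'(\xi)\,h'(-1)\cdot\log B,
\]
where $h'(\cdot)$ is the modified Weil height and $C(3,d)$ is an explicit quantity polynomial in $d$. The decisive feature is that the integer coefficients $n,-1,-2k$ enter this estimate only through $\log B$. It remains to bound every parameter in terms of $\norm{\zeta}+\norm{\xi}$. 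Since $|\theta|\le\pi$ we have $|k|\le(n+2)/2$, so $B\le 2n$ and $\log B=O(\log n)$. From the standard relation between a defining polynomial and the Weil height (Landau's inequality, bounding the Mahler measure by the Euclidean norm of the coefficient vector) one gets $h(\zeta),h(\xi)=(\norm{\zeta}+\norm{\xi})^{O(1)}$, while $|\log\zeta|,|\log\xi|\le\pi$, whence $h'(\zeta)\,h'(\xi)\,h'(-1)=(\norm{\zeta}+\norm{\xi})^{O(1)}$; likewise $d\le\deg(\zeta)\deg(\xi)\le(\norm{\zeta}+\norm{\xi})^{O(1)}$, so $C(3,d)=(\norm{\zeta}+\norm{\xi})^{O(1)}$. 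Substituting, $\log|\Lambda|>-(\norm{\zeta}+\norm{\xi})^{D'}\log n$ for a suitable constant $D'$, i.e.\ $|\Lambda|>n^{-(\norm{\zeta}+\norm{\xi})^{D'}}$; combining with $|\zeta^n-\xi|\ge\frac{2}{\pi}|\Lambda|$ and absorbing the factor $\frac{2}{\pi}$ into the exponent (using $n\ge2$ and $\norm{\zeta}+\norm{\xi}\ge2$) gives $|\zeta^n-\xi|>n^{-(\norm{\zeta}+\norm{\xi})^{D}}$ for a slightly larger constant $D$, as required.

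The step to watch is this final, quantitative, bookkeeping. What makes the argument work is precisely that Baker--W\"ustholz contributes only a $\log B=O(\log n)$ factor, turning the lower bound into an inverse polynomial $n^{-(\norm{\zeta}+\norm{\xi})^{D}}$ rather than the inverse-exponential bound $c^{-n}$ that a naive Liouville-type estimate on the nonzero algebraic number $\zeta^n-\xi$ would yield (its height grows linearly in $n$). One must therefore check carefully that the three modified heights, the field degree $d$, and the coefficient bound $B$ are each at most polynomial in $\norm{\zeta}+\norm{\xi}$ (respectively linear in $n$), and that the branch integer $k$ is chosen so that $\Lambda$ genuinely captures the chordal distance $|\zeta^n-\xi|$ and is provably nonzero.
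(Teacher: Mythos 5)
The paper does not prove this lemma itself --- it is quoted verbatim from the cited reference, where it is derived exactly as you do: rewrite $|\zeta^n-\xi|$ as (a constant times) a nonzero three-term linear form in logarithms $n\log\zeta-\log\xi-2k\log(-1)$, apply Baker--W\"ustholz, and observe that the integer coefficients enter only through $\log B=O(\log n)$ while the heights and the field degree are polynomial in $\norm{\zeta}+\norm{\xi}$. Your reconstruction is correct and matches that standard argument, including the essential points (choice of the branch integer $k$ so that $|\Lambda|\le\pi$ and $\Lambda\neq 0$ exactly when $\zeta^n\neq\xi$, the chord-versus-arc inequality, and the polynomial bounds on $h'$, $d$, and $C(3,d)$).
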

	We now turn to prove Lemma~\ref{lem:main lemma}. The following synopsis contains the main ideas. The full proof can be found in Appendix~\ref{apx:proof main lemma}.
\begin{proof-synopsis}
	Since $\set{\gamma^n:n\in \NN}$ is dense on the unit circle, we consider $f(z)$ for $z$ in the unit circle. In the full proof, we show that $\set{z:f(z)=0\wedge |z|=1}$ contains at most four points $\set{z_1,\ldots,z_4}$, whose coordinates are algebraic. 	Since $\gamma$ is not a root of unity, it holds that $\gamma^{n_1}\neq \gamma^{n_2}$ for every $n_1\neq n_2\in \NN$ . 
	Thus, there exists $N_1\in\NN$ such that $\gamma^n\notin\set{z_1,\ldots,z_4}$ for every $n>N_1$. Moreover, by Lemma D.1 in~\cite{chonev2013complexity}, we have that $N_1=k^{O(1)}$, where  $k= \norm{\gamma}+\sum_{j=1}^4 \norm{z_j}$, and $N_1$ can be computed in polynomial time in $k$.
	Then, by Lemma~\ref{lem:Baker on unit circle}, there exists a constant $D\in \NN$ such that for every $n\ge N_1$ and $1\le j\le 4$ we have that $|\gamma^n-z_j|>\frac{1}{n^{(k^D)}}$. Intuitively, for $n>N_1$ we have that $\gamma^n$ does not get close to any $z_i$ ``too quickly'' as a function of $n$. In particular, for $n>N_1$ we have $f(\gamma^n)\neq 0$. It thus remains to show that for large enough $n$, $r(n)$ does not affect the sign of $f(\gamma^n)+r(n)$. Intuitively, this is the case because $r(n)$ decreases exponentially, while $|f(\gamma^n)|$ is bounded from below by an inverse polynomial. While proving that this holds in general is not very difficult, note that we also need the bound on $N$ in the statement of the Lemma to be effectively computable and to be $2^{\norm{I}^{O(1)}}$, which complicates things significantly.
		
	We consider the function $g:(-\pi,\pi]\to \RR$ defined by $g(x)=f(e^{ix})$.  Explicitly, we have $g(x)=2|A|\cos(2x+\theta_A)+2|B|\cos(x+\theta_B)+C$ where $\theta_A=\arg(A)$ and $\theta_B=\arg(B)$. By the above, $g$ has at most four roots, denoted $\phi_1,\ldots,\phi_4$. We now show that there exist $N_2\in \NN$ and a non-negative polynomial $p(n)$ such that $f(\gamma^n)=g(\arg (\gamma^n))>\frac{1}{p(n)}$ for every $n>N_2$. For every $1\le j\le 4$ consider the first non-zero Taylor polynomial $T_j$ of $g$ around $\phi_j$. In Lemma~\ref{lem:degree of taylor} we show that the degree of such approximations is at most $3$. We show that there exists $\epsilon_1>0$ such that for every  $x\in(\phi_j-\epsilon_1,\phi_j+\epsilon)$ it holds that (1) $|g(x)-T_j(x)|\le \frac12 |T_j(x)|$, (2) $g$ is monotone on either side of $\phi_j$, and (3) $T$ is monotone with the same tendency of $g$  (see Figure~\ref{fig:functions} in Appendix~\ref{apx:proof main lemma} for an illustration).
	In Lemma~\ref{lem:epsilon computable} we also show that crucially, we can require $\epsilon_1$ to be efficiently computable and $\frac1\epsilon=2^{n^{O(1)}}$.
	
    Consider $n\in \NN$ such that $\gamma^n\in \bigcup_{j=1}^4 (\phi_j-\epsilon_1,\phi_j+\epsilon_1)$ and such that $n>N_1$, then as we have seen above, $\frac{1}{n^{(k^D)}}<|\gamma^n-z_j|$. But $|\gamma^n-z_j|<|\arg(\gamma^n)-\phi_j|$ (since the euclidean distance is smaller than the arc length), so $|\arg(\gamma^n)-\phi_j|>\frac{1}{n^{(k^D)}}$. From requirements (1) and (2) of $\epsilon_1$, we get that $|g(\arg(\gamma^n))|\ge \frac12|T_j(\gamma^n)|$ and from the monotonicity of $T_j$ in the neighbourhood of $\phi_j$ (requirement (3)), we have that $\frac12|T_j(\gamma^n)|>\frac12 \min\set{{|T_j(\phi_j+\frac{1}{n^{(k^D)}})|,|T_j(\phi_j-\frac{1}{n^{(k^D)}})|}}$, from which we conclude that $|g(\arg(\gamma^n))|>\frac{1}{p(n)}$ for some non-negative polynomial $p$. Moreover, we can compute the representation of $p$ in polynomial time.
    
    Finally, for $x\notin \bigcup_{j=1}^4 (\phi_j-\epsilon_1,\phi_j+\epsilon_1)$, we have that $|g(x)|$ is bounded from below by a constant. Our careful accounting of $\norm{\epsilon_1}$ in Lemma~\ref{lem:epsilon computable} allows us to compute this bound, and show that it is not too small.
    
    The last step in the proof is to show that $r(n)$ decreases fast enough such that $r(n)<\frac{1}{p(n)}$ for every $n>N_3$  for some large enough $N_3\in \NN$. Clearly this holds eventually, since $r(n)$ decreases exponentially. However, we also need a bound on the size of $N_3$, which requires more effort. Recall that $r(n)=\sum_{l=1}^m D_l \beta^n_l+ \co{D_l}\co \beta^n_l$. By applying The root separation bound~(\ref{eq:Mignotte}) from Section~\ref{sec:algebraic numbers} to $1-|\beta_l|$, we compute $\epsilon\in (0,1)$ and $N_3\in \NN$ such that $\frac{1}{\epsilon}$ and $N_3$ are $2^{\norm{I}^{O(1)}}$, and for every $n>N_3$ it holds that $|r(n)|<(1-\epsilon)^n$. Using this, we can find $N_4\in \NN$ such that $N_4=2^{\norm{I}^{O(1)}}$ and $|r(n)|<\frac{1}{p(n)}$ for all $n>N_4$, from which we can conclude the proof.
	\end{proof-synopsis}

	We are now ready to prove Theorem~\ref{thm:solve system}
	\begin{proof}
	For every expression in ${\rm Sys}$, let $f$ be the corresponding function as per Lemma~\ref{lem:main lemma}, and compute its respective bound $N$. If $\relBow$ is ``$=$'', then by Lemma~\ref{lem:main lemma}, if the equation is satisfiable for $n\in \NN$, then $n<N$.

	If all the $\relBow$ are ``$>$'', then for each such inequality compute $\set{z: f(z)>0}$. If the intersection of these sets is empty, then if $n$ is a solution for the system, it must hold that $n<N$. If the intersection is non-empty, then it is an open set. Since $\gamma$ is not a root of unity, then $\set{\gamma^n:n\in \NN}$ is dense in the unit circle. Thus, there exists $n>N$ such that $\gamma^n$ is in the above intersection, so the system has a solution. Checking the emptiness of the intersection can be done in polynomial time using Theorem~\ref{thm:renegar}.
	
	Thus, it remains to check whether there exists a solution $n<N$. Recall that $N=2^{\norm{I}^{O(1)}}$. Thus, in order to check whether the system is solved for $n<N$, we need to compute, e.g., $\alpha^{2n}$, whose representation is exponential in $\norm{I}$, so a naive implementation would take exponential space.
		
	Instead, we take a similar approach to~\cite{chonev2015polyhedron}: by representing numbers as arithmetic circuits, deciding the positivity (or testing for $0$ equality) can be done using an oracle to \posslp, which by~\cite{allender2009complexity} is in the counting hierarchy. By first guessing $n<N$, the problem can be solved in \NPposslp, which is contained in \PSPACE.
	\end{proof}
	
\section{Conclusions}
\subsection{Proof of Theorem~\ref{thm:main PSPACE}}
We conclude by giving an explicit proof of Theorem~\ref{thm:main PSPACE}: Given polytopes $P_1$ and $P_2$ and a matrix $\A$, if $\A$ is singular, we first apply (in polynomial time) the reduction in Section~\ref{sec:reduction to invertible}. Thus, we can assume $\A$ is invertible. Next, if $P_1$ or $P_2$ are unbounded, for each unbounded face $F$ we proceed as follows: decompose $F$ as per Lemma~\ref{lem:2d polytope decomposition}, so $F=\bigcup_{i=1}^m A_i$, and recall that iterating over the $A_i$'s can be done in $\PSPACE$. 
In each iteration, consider an edge $E$ of $P_1$ and a face $F$ of $P_2$ (both of which may belong to sets $A_i$ as above). Formulate the first-order sentence~(\ref{eq:quantified}) in Section~\ref{sec:from invertible to equations}, and apply Theorem~\ref{thm:systems} to obtain an equivalent disjunction of systems $\bigvee_{i=1}^M{\rm Sys_i}$, where $M$ is constant. Then, for each system ${\rm Sys_i}$, check in $\PSPACE$ whether it has a solution, using either Section~\ref{sec:root of unity} or Theorem~\ref{thm:solve system}. 
If no solution was found, check in $\PSPACE$ whether a vertex of $P_1$ collides with $P_2$, using the algorithm in~\cite{chonev2015polyhedron}. Then, if still no solution is found, repeat the same procedure by interchanging the roles of $P_1$ and $P_2$, and considering the matrix $\A^{-1}$ instead of $\A$. The correctness and complexity of this procedure follow from the proofs of the respective theorems. 

\subsection{Discussion}
This paper studies an extension of the Orbit Problem, in which the input is existentially quantified over a polytope, and the target is a polytope. 
The importance of this work is twofold: from a practical perspective, we provide an algorithm for deciding the termination of linear while loops with affine guards, up to dimension three, when the input is not fixed. From a more theoretical perspective, and as already pointed out by Kannan and Lipton in~\cite{kannan1986polynomial}, the Orbit Problem and its variants are closely related to long-standing open problems such as the {\em Skolem Problem}, and various number-theoretic problems. It is therefore useful and compelling to push the borders of decidability, in order to identify the core of the remaining difficulties, and to eventually hopefully overcome them. 

Finally, as discussed in Section~\ref{sec: problem intro}, the problem at hand can be viewed as a particular case of the Orbit Problem in dimension six where the target is a semi-algebraic set. As the general problem is known to be hard even in dimension four, our work here suggests that interesting and useful fragments are tractable even in high dimensions.

\appendix

\pagebreak
\section{Reduction to the Invertible Case}
\label{apx:reduction to invertible}
Recall that we are given polytopes $P,R\subseteq\RR^3$ and a matrix $\A\in (\AA\cap \RR)^{3\times 3}$, where $0$ is an eigenvalue of $\A$ with multiplicity $1$. As discussed in Section~\ref{sec:reduction to invertible}, we can write $\A=D^{-1} \begin{pmatrix}
0 & 0 \\
0 & B \\
\end{pmatrix} D$ where $D$ is an invertible matrix with real-algebraic entries, and $B\in (\AA\cap \RR)^{2\times 2}$ is also invertible.

Then, for every $x\in P$ and $n\in \NN$ it holds that $\A^nx\in R$ iff $D^{-1}\begin{pmatrix}
	0 & 0 \\
	0 & B^n \\
\end{pmatrix}Dx\in R$ iff $\begin{pmatrix}
0 & 0 \\
0 & B^n \\
\end{pmatrix}Dx\in DR$ (where $DR=\set{Dv:v\in R}$).
Observe that the first coordinate of $\begin{pmatrix}
0 & 0 \\
0 & B \\
\end{pmatrix}Dx$ is  $0$ for every vector $x$, and Consider the set $\begin{pmatrix}
0 & 0 \\
0 & B \\
\end{pmatrix}DP=\set{\begin{pmatrix}
	0 & 0 \\
	0 & B \\
	\end{pmatrix}Dx: x\in P}$. We can write 
	$$\begin{pmatrix}
0 & 0 \\
0 & B \\
\end{pmatrix}DP=\set{\begin{pmatrix}
	0\\
	x_1\\
	x_2
	\end{pmatrix}: \begin{pmatrix}
	x_1\\
	x_2
	\end{pmatrix}\in P'}$$ where $P'\subseteq \RR^2$ is the intersection of $\begin{pmatrix}
0 & 0 \\
0 & B \\
\end{pmatrix}DP$
with the $[yz]$ plane, in the standard basis $\set{\begin{pmatrix}
	1\\0
	\end{pmatrix},\begin{pmatrix}
	0\\1
	\end{pmatrix}}$.

Now, for $n\ge 1$, we get that there exists $x\in P$ such that $\A^n x\in R$ iff there exists $x\in P$ such that 
$$\begin{pmatrix}
0 & 0 \\
0 & B^{n-1} \\
\end{pmatrix}\begin{pmatrix}
0 & 0 \\
0 & B \\
\end{pmatrix}Dx \in DR$$ iff there exists $x'\in P'$ such that $B^{n-1}x'\in \begin{pmatrix}
0 & 1 & 0\\
0 & 0 & 1
\end{pmatrix}\left(DR\cap sp(\set{\begin{pmatrix}
	0\\1\\0
	\end{pmatrix},\begin{pmatrix}
	0\\0\\1
	\end{pmatrix}})\right)$.

Since all the intersections and matrices above can be computed in polynomial time, and since the intersections above are polytopes, we conclude that if $\A$ is singular, we can reduce the dimension of the problem. 

Next, if the multiplicity of $0$ is $2$, then we can write  $\A=D^{-1}\begin{pmatrix}
0 & 1 & 0\\
0 & 0 & 0\\
0 & 0 & \rho
\end{pmatrix} D$ where $\rho$ is a real eigenvalue. Then $\A^n=D^{-1}\begin{pmatrix}
0 & 0 & 0\\
0 & 0 & 0\\
0 & 0 & \rho^n
\end{pmatrix} D$ for every $n\ge 2$, and the same approach as above can be taken.

Finally, if the multiplicity of $0$ is $3$, then $\A^3=0$, so the problem becomes trivial.

\section{Quantifier Elimination}
\label{apx:quantifier elimination}
In this section we eliminate quantifiers from the expression
\begin{equation}
\label{eq:quantifiedApx}
\exists \lambda,\mu,\nu: 0\le \lambda,\mu,\nu\le 1 \wedge \mu+\nu\le 1 \wedge \bigwedge_{i=1}^3 \left( f_i\alpha^n +\co{f_i}\, \co{\alpha}^n + g_i \rho^n + \lambda (h_i\alpha^n +\co{h_i}\, \co{\alpha}^n + k_i\rho^n)= s_i +\mu t_i +\nu r_i\right)
\end{equation}
using the Fourier-Motzkin quantifier-elimination algorithm. 

We start by recalling the Fourier-Motzkin algorithm. Given a set of linear inequalities in the variables $x$ (which we want to eliminate), isolate $x$ in each equation. Then, for each pair of equations of the form $x\le {\rm expression}_1$ and $x\ge {\rm expression}_2$, add the inequality ${\rm expression}_1\ge {\rm expression}_2$ (with analogous rules for strict inequalities). After doing so for every relevant pair of inequalities, remove all original inequalities involving $x$. The Fourier-Motzkin Theorem states that the new system is satisfiable iff the original system is satisfiable. Note that the new system is also a system of linear inequalities in the original variables.

By repeating this process for all variables, we end up with an equivalent, variable-free system of inequalities. 

For the purpose of proving Theorem~\ref{thm:systems}, we need some assumptions on the coefficients of the resulting inequalities, in order to have the form described in Theorem~\ref{thm:systems}. We thus analyze in some detail the specific application of Fourier-Motzkin elimination to our setting.

\subsection{Proof of Theorem~\ref{thm:systems}}
\label{sec:proof of Theorem Systems}
We start by explicitly writing down the expressions we consider in~\ref{eq:quantifiedApx}. We think of ``$=$'' as a pair of ``$\ge$'' and ``$\le$'' inequalities.

$$
\begin{array}{l}
\lambda \leq 1 \\
\lambda \geq 0 \\
\mu \leq 1 \\
\mu \geq 0 \\
\nu \leq 1 \\
\nu \geq 0 \\
\mu +\nu \leq 1 \\
\co{\alpha} ^n \co{f_1}+\alpha ^n f_1+\rho ^n g_1+\lambda  \left(\co{\alpha} ^n \co{h_1}+\alpha ^n h_1+\rho ^n k_1\right)-(\nu  r_1+s_1+\mu  t_1)=0 \\
\co{\alpha} ^n \co{f_2}+\alpha ^n f_2+\rho ^n g_2+\lambda  \left(\co{\alpha} ^n \co{h_2}+\alpha ^n h_2+\rho ^n k_2\right)-(\nu  r_2+s_2+\mu  t_2) =0\\
\co{\alpha} ^n \co{f_3}+\alpha ^n f_3+\rho ^n g_3+\lambda  \left(\co{\alpha} ^n \co{h_3}+\alpha ^n h_3+\rho ^n k_3\right)-(\nu  r_3+s_3+\mu  t_3)=0 \\
\end{array}
$$

We make the following observations on the structure of the system.
\begin{observation}
	\label{obs:coefficients}
	The coefficients of the system above satisfy the following.
	\begin{enumerate}
	\item \label{item:nu} The coefficients of $\nu$ do not depend on $\alpha,\rho$ or $n$.
	\item \label{item:mu}The coefficients of $\mu$ do not depend on $\alpha,\rho$ or $n$.
	\item \label{item:lambda}The coefficients of $\lambda$ are either constant, or of the form $A\alpha^n+\co{A}\co{\alpha}^n+B\rho^n$, for some $A\in \AA$ and $B\in \RR\cap \AA$ (that is, the coefficients of $\alpha^n$ and $\co{\alpha}^n$ are conjugates, and the coefficient of $\rho^n$ is real)
	\item \label{item:free}The free coefficients of the form $A\alpha^n+\co{A}\co{\alpha}^n+B\rho^n+C$, for some $A\in \AA$ and $B,C\in \RR\cap \AA$
	\end{enumerate}
\end{observation}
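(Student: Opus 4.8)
The plan is to verify the four claims by direct inspection of the eleven (in)equalities written out just before the Observation, tracking only where the quantities $\alpha^n,\co\alpha^n,\rho^n$ can appear. The key structural fact to exploit is that, as established in Section~\ref{sec:from invertible to equations}, the dependence on $n$ enters the system \emph{only} through the expression $(\A^n u+\lambda \A^n v)_i = f_i\alpha^n+\co{f_i}\co\alpha^n+g_i\rho^n+\lambda(h_i\alpha^n+\co{h_i}\co\alpha^n+k_i\rho^n)$ on the left-hand side of the three equality constraints; the right-hand side $s_i+\mu t_i+\nu r_i$ comes from the fixed face $F$ and is manifestly independent of $n$ (and of the eigenvalues $\alpha,\rho$). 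So the only $n$-dependent atoms anywhere in the system are the six products $f_i\alpha^n$, $\co{f_i}\co\alpha^n$, $g_i\rho^n$, $h_i\alpha^n$, $\co{h_i}\co\alpha^n$, $k_i\rho^n$, all of which sit in the three equality rows.

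First I would handle items~\ref{item:nu} and~\ref{item:mu}: the variable $\nu$ occurs only in the three equality constraints (with coefficient $-r_i$) and in the trivial bounds $\nu\le 1$, $\nu\ge 0$, $\mu+\nu\le 1$; in none of these is the coefficient of $\nu$ touched by the $\A^n$ term, so every coefficient of $\nu$ is one of $-r_i, 1, 0, 1$ — all constants, independent of $\alpha,\rho,n$. The identical argument applies to $\mu$, whose coefficients are $-t_i, 1, 0, 1$. Next, for item~\ref{item:lambda}: $\lambda$ appears in $\lambda\le 1$, $\lambda\ge 0$ (coefficient $1$, constant), and in the $i$-th equality with coefficient exactly $h_i\alpha^n+\co{h_i}\co\alpha^n+k_i\rho^n$; by the construction of the $h_i,k_i$ recalled above we have $h_i\in\AA$ and $k_i\in\AA\cap\RR$, so this has precisely the claimed shape $A\alpha^n+\co A\co\alpha^n+B\rho^n$ with $A=h_i$, $B=k_i$. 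Finally item~\ref{item:free}: the constant (free) term of the $i$-th equality, after moving everything to one side, is $f_i\alpha^n+\co{f_i}\co\alpha^n+g_i\rho^n-s_i$, which is of the form $A\alpha^n+\co A\co\alpha^n+B\rho^n+C$ with $A=f_i\in\AA$, $B=g_i\in\AA\cap\RR$, $C=-s_i\in\AA\cap\RR$; for the seven bound constraints the free term is one of $-1,0,-1$, a special case with $A=B=0$.

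There is essentially no obstacle here — the Observation is a bookkeeping statement about the raw system, not about the output of Fourier-Motzkin — but the one point that deserves a careful remark is \emph{why} the coefficients of $\alpha^n$ and $\co\alpha^n$ are genuinely complex conjugates of one another (and the coefficient of $\rho^n$ genuinely real), rather than merely independent algebraic numbers. This is inherited from the spectral decomposition $\A^n=B^{-1}\,\mathrm{diag}(\rho^n,\alpha^n,\co\alpha^n)\,B$ with $B=(w_\rho\ w_\alpha\ \co{w_\alpha})^T$: because $\co\alpha$ is the complex conjugate of $\alpha$ and its eigenvector can be taken to be $\co{w_\alpha}$, each entry $(\A^n)_{i,j}=c_{i,j}\alpha^n+\co{c_{i,j}}\co\alpha^n+d_{i,j}\rho^n$ has conjugate coefficients on the complex eigenvalue terms and a real coefficient on $\rho^n$ (since $\A^n$ is a real matrix). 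Taking real linear combinations with the real entries of $u$ and $v$ preserves this conjugacy, which is exactly what yields $f_i,h_i\in\AA$ paired with their conjugates and $g_i,k_i\in\AA\cap\RR$. With this remark in place the four items follow immediately by the case analysis above.
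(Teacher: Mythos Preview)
Your proposal is correct and follows exactly the approach the paper intends: the Observation is stated without proof in the paper, being read off directly from the displayed system, and your case-by-case inspection of the eleven (in)equalities is precisely this verification made explicit. Your additional remark tracing the conjugacy structure back to the spectral decomposition of $\A^n$ is a helpful elaboration of what the paper leaves implicit in Section~\ref{sec:from invertible to equations}.
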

We eliminate $\nu$ first. By Observation~\ref{obs:coefficients}.\ref{item:nu}, after isolating $\nu$ (which involves dividing by the coefficient of $\nu$), Observations \ref{obs:coefficients}.\ref{item:mu}, \ref{obs:coefficients}.\ref{item:lambda}, and \ref{obs:coefficients}.\ref{item:free} still hold. Thus, after eliminating $\nu$ and aggregating the coefficients of $\mu$ and $\lambda$, Observations \ref{obs:coefficients}.\ref{item:mu}, \ref{obs:coefficients}.\ref{item:lambda}, and \ref{obs:coefficients}.\ref{item:free} still hold, and Observation Observation~\ref{obs:coefficients}.\ref{item:nu} is irrelevant, since $\nu$ was eliminated.

By Observation Observation~\ref{obs:coefficients}.\ref{item:mu}, following the same reasoning for eliminating $\mu$ results in a system of inequalities in $\lambda$ that satisfies Observations \ref{obs:coefficients}.\ref{item:lambda} and \ref{obs:coefficients}.\ref{item:free}.

It now remains to eliminate $\lambda$. Note that here, even isolating $\lambda$ is not trivial. Indeed, in order to divide by a coefficient $A\alpha^n+\co{A}\co{\alpha}^n+B\rho^n$, we need to know its sign (and whether it is $0$). Thus, at this point in the elimination, we split the system into a disjunction of systems, where in each system we add an assumption on the sign of $A\alpha^n+\co{A}\co{\alpha}^n+B\rho^n$. Thus, an inequality of the form $(A\alpha^n+\co{A}\co{\alpha}^n+B\rho^n)\lambda\le {\it expression}$ will yield a disjunction of three systems:
\begin{itemize}
	\item $\lambda \le \frac{{\it expression}}{A\alpha^n+\co{A}\co{\alpha}^n+B\rho^n}\wedge A\alpha^n+\co{A}\co{\alpha}^n+B\rho^n>0$ 
	\item $\lambda \ge \frac{{\it expression}}{A\alpha^n+\co{A}\co{\alpha}^n+B\rho^n}\wedge A\alpha^n+\co{A}\co{\alpha}^n+B\rho^n<0$
	\item $0 \le {\it expression}\wedge A\alpha^n+\co{A}\co{\alpha}^n+B\rho^n=0$
\end{itemize}

After constructing these systems and combining the inequalities according to the algorithm, we multiply by a common denominator to get a system of inequalities without variables. In these inequalities, we multiply expressions of the form of Observation \ref{obs:coefficients}.\ref{item:free} by either constants, or by expressions of the form of Observation 3. Thus, end up with expressions of either the form of Observation \ref{obs:coefficients}.\ref{item:free}, or of the form
\begin{align*}
&(A\alpha^n+\co{A}\co{\alpha}^n+B\rho^n)(A'\alpha^n+\co{A'}\co{\alpha}^n+B'\rho^n+C')=
AA'\alpha^{2n}+\co{AA'}\co{\alpha}^{2n}+\\
&(AB'+A'B) \alpha^n\rho^n+ (\co{A}B'+\co{A'}B) \co{\alpha^n}\rho^n+
BB'\rho^{2n}+(A\co{A'}+\co{A}A')|\alpha|^{2n}+C'A\alpha^n+C'\co{A}\co{\alpha}^n+C'B\rho^n
\end{align*}

Finally, we renaming the coefficients, and by adding a constant term, both the latter form and that of Observation \ref{obs:coefficients}.\ref{item:lambda} are as described in Theorem~\ref{thm:systems}. Finally, we split every nonstrict inequality to a disjunction of an equality and a strict inequality, and distribute the conjunction over them.

We note that the numbers of systems and equations are bounded by constants, since the removal does not depend on the coefficients, but only on the form of the expressions.
\qed

\section{The case of only real eigenvalues}
\label{apx:real eigen}
In this section we consider the case where the matrix $\A$ has only real eigenvalues, denoted $\rho_1,\rho_2,\rho_3$. In this case, by converting $\A$ to Jordan normal form, there exists an invertible matrix $B\in (\AA\cap \RR)^{3\times 3}$ such that one of the following holds:
\begin{enumerate}
\item $\A=B^{-1}\begin{pmatrix}
\rho_1 & 0 & 0\\
0 & \rho_2 & 0\\
0& 0 & \rho_3 \\
\end{pmatrix}B$, in which case $\A^n=B^{-1}\begin{pmatrix}
\rho_1^n & 0 & 0\\
0 & \rho_2^n & 0\\
0& 0 & \rho_3^n \\
\end{pmatrix}B$.
\item $\A=B^{-1}\begin{pmatrix}
\rho_1 & 1 & 0\\
0 & \rho_2 & 0\\
0& 0 & \rho_3 \\
\end{pmatrix}B$ with $\rho_1=\rho_2$, in which case $\A^n=B^{-1}\begin{pmatrix}
\rho_1^n & n\rho_1^{n-1} & 0\\
0 & \rho_1^n & 0\\
0& 0 & \rho_3^n \\
\end{pmatrix}B$.
\item $\A=B^{-1}\begin{pmatrix}
\rho_1 & 1 & 0\\
0 & \rho_2 & 1\\
0& 0 & \rho_3 \\
\end{pmatrix}B$ with $\rho_1=\rho_2=\rho_3$, in which case $\A^n=B^{-1}$\\ $\begin{pmatrix}
\rho_1^n & n\rho_1^{n-1} & \frac12n(n-1)\rho_1^{n-2}\\
0 & \rho_1^n & n\rho_1^{n-1}\\
0& 0 & \rho_1^n \\
\end{pmatrix}B$.
\end{enumerate}

We consider here the latter case, as the first two are similar and simpler.
We start by following the lines of Section~\ref{sec:from invertible to equations}. That is, we formulate the problem as a first-order sentence, and proceed to remove the quantifiers as per Appendix~\ref{apx:quantifier elimination}. Consider $n\in \NN$ and a vector $v$, then we can write
$$\A^nv=B^{-1}\begin{pmatrix}
\rho_1^n & n\rho_1^{n-1} & \frac12n(n-1)\rho_1^{n-2}\\
0 & \rho_1^n & n\rho_1^{n-1}\\
0& 0 & \rho_1^n \\
\end{pmatrix}Bv= a\rho_1^n+bn\rho_1^{n-1}+cn(n-1)\rho_1^{n-2}$$
Thus, the formulation of the first-order sentence~\ref{eq:quantified} in Section~\ref{sec:from invertible to equations} takes a similar form in this case, and after applying quantifier elimination, we end up with a disjunction as per Theorem~\ref{thm:systems}, where the expressions in each system are of the form
$$A \rho_1 ^{2n}+ Bn\rho_1^{2n}+Cn^2\rho_1^{2n}+ D\rho_1^n+ E n\rho_1^n+ Fn^2\rho_1^n+G\relBow 0$$
Assuming $\rho_1>0$ (otherwise we can split according to odd and even $n$), for each such expression we can compute a bound $N\in \NN$ based on the rate of growth of the different components, such that either for every $n>N$ the equation holds, or for every $n>N$ it does not hold. This is done in a similar manner to the proof of Theorem~\ref{thm:solve system}. Thus, either we determine that a solution exists since all the expressions are satisfied for large enough $n$, or we need to check the solutions up to $N$, which can be done in $\PSPACE$ (as in the proof of Theorem~\ref{thm:solve system}).

\section{The case where $\frac{\alpha}{|\alpha|}$ is a root of unity}
\label{apx:root of unity}
Let $\gamma=\frac{\alpha}{|\alpha|}$. We assume that $\gamma$ is a root of unity. Thus, there exists $d\in \NN$ such that $\gamma^d=1$. After obtaining the systems of expressions as per Theorem~\ref{thm:systems}, each expression~\ref{eq: generic expression} can be written as
$$|\alpha|^{2n}A\gamma^{2n}+|\alpha|^{2n}\co{A}\co{\gamma}^{2n}+ |\alpha|^{n}B\gamma^n\rho^n + |\alpha|^{n}\co{B}\co{\gamma}^n\rho^n + C\rho^{2n} +  D|\alpha|^{2n}+ |\alpha|^n E\gamma^n + |\alpha|^n\co{E}\co{\gamma}^n+ F\rho^n + G\relBow 0$$
Observe that $\gamma^2$ is also a root of unity of order at most $d$. Thus, for every $n\in \NN$ it holds that $(\gamma^n,\gamma^{2n})=(\gamma^{n\bmod d},\gamma^{2n\bmod d})$.  Consider the set $V=\set{(n\bmod d,2n\bmod d): n\in \NN}$, and note that $|V|\le d^2$. For every $(k,k')\in V$, let $N_{(k,k')}$ be the minimal number such that $(n\bmod d,2n\bmod d)=(k,k')$. Observe that $\{n\in \NN: (n\bmod d,2n\bmod d)=(k,k')\} = \{N_{(k,k')}+m|V|:m\in \NN\}$.
For each system ${\rm Sys}$ of expressions, we construct $|V|$ systems $\set{{\rm Sys}_{(k,k')}}_{(k,k')\in V}$ such that ${\rm Sys}_{(k,k')}$ is obtained from ${\rm Sys}$ by replacing, in every expression, $\gamma^n$ with $\gamma^{k}$, replacing $\gamma^{2n}$ with $\gamma^{k'}$, and replacing $n$ in the remaining powers by $N_{(k,k')}+m|V|$. By pushing constants into the coefficients and renaming $\alpha^|V|=\beta$ and $\rho^|V|=\delta$, the expression above can be written as
$$|\beta|^{2m}A\gamma^{k'}+|\beta|^{2m}\co{A}\co{\gamma}^{k'}+ |\beta|^{m}B\gamma^k\delta^{m} + |\beta|^{m}\co{B}\co{\gamma}^k\delta^{m} + C\delta^{2m} +  D|\beta|^{2m}+ |\beta|^{m} E\gamma^k + |\beta|^{m}\co{E}\co{\gamma}^k+ F\delta^{m} + G\relBow  0$$
This becomes
$$2\re(A\gamma^{k'})|\beta|^{2m}+ 2\re(B\gamma^k)|\beta|^{m}\delta^{m} + C\delta^{2m} + D|\beta|^{2m}+ 2\re(E\gamma^k)|\beta|^{m} + F\delta^{m} + G\relBow  0$$
These expressions contain only real-algebraic constants, and thus the system can be solved in similar techniques as those of Appendix~\ref{apx:real eigen}. 

Finally, we show that the number of systems is polynomial, by showing that $d\le \deg(\gamma)^2$. The proof appears in~\cite{kannan1986polynomial}, and we bring it here for completeness. Since $\gamma$ is a primitive root of unity of order $d$, then the defining polynomial $p_\gamma$ of $\gamma$ is the $d$-th Cyclotomic polynomial, so $\deg(\gamma)=\Phi(d)$, where $\Phi$ is Euler's totient function. Since $\Phi(d)\ge \sqrt{d}$, we get that $d\le \deg(\gamma)^2$.

\section{Proof of Lemma~\ref{lem:main lemma}}
\label{apx:proof main lemma}
By identifying $\CC$ with $\RR^2$ (where $z=x+iy$ is identified with $(x,y)$) we identify $f$ with the function $f:\RR^2\to \RR$ defined by
\begin{align*}
f(x,y)=&A(x+iy)^2+\co{A(x+iy)^{2}}+ B (x+iy) + \co{B(x+iy)} + C\\
=& 2\re(A(x+iy)^2)+2\re(B (x+iy))+C\\
=& 2\re((\re(A)+i\im(A))(x^2-y^2+i2xy))+2\re((\re(B)+i\im(B)) (x+iy))+C\\
=& 2(\re(A)(x^2-y^2)-\im(A)2xy)+2(\re(B)x-\im(B)y)+C
\end{align*}		
Since $\set{\gamma^n:n\in \NN}$ is dense on the unit circle, our interest in $f$ is also about the unit circle. 		
Since $f$ is a polynomial with algebraic coefficients, we can find in polynomial time a description of the algebraic set $\set{(x,y):f(x,y)=0 \wedge x^2+y^2=1}$. Note that since the coefficients of $x^2$ and $y^2$ in $f$ are either both $0$, or they differ in their sign, then this set is not the entire unit circle. Therefore, by B\'ezout's Theorem, this set is discrete and consists of at most $4$ points. Indeed, this set is the intersection of distinct contours of bivariate quadratic polynomials, so it corresponds to the roots of a polynomial of degree at most $4$. 
Let $\set{(x_1,y_1),\ldots,(x_4,y_4)}$ be these points, and let $z_1=x_1+iy_1, \ldots,z_4=x_4+iy_4$ be the respective complex numbers. Note that these points have algebraic coordinates, so $z_1,...,z_4$ are algebraic numbers.
Moreover, since these numbers are attained as the roots of a polynomial of degree 4 whose coefficients are polynomial in those of $f$, then we have that $\norm{z_1},\ldots,\norm{z_4}$ are polynomial in $\norm{f}$.
Note that if $A=B=0$, then $C\neq 0$ by our assumption, and there are no roots. Thus, we assume for now that $A$ and $B$ are not both $0$. We remove this assumption after we are done handling the roots.

Since $\gamma$ is not a root of unity, then in particular, for every $n_1\neq n_2\in \NN$ it holds that $\gamma^{n_1}\neq \gamma^{n_2}$. 
Thus, there exists $N_1\in\NN$ such that $\gamma^n\notin\set{z_1,\ldots,z_4}$ for every $n>N_1$. Moreover, by~\cite{chonev2013complexity}, we have that $N_1=k^{O(1)}$, where  $k= \norm{\gamma}+\sum_{j=1}^4 \norm{z_j}$, and $N_1$ can be computed in polynomial time in $k$.
Then, by Lemma~\ref{lem:Baker on unit circle}, there exists a constant $D\in \NN$ such that for every $n\ge N_1$ and $1\le j\le 4$ we have that $|\gamma^n-z_j|>\frac{1}{n^{(k^D)}}$. 

Let $\theta_A=\arg (A)$, $\theta_B=\arg (B)$, and $\phi_j=\arg(z_j)$ for every $1\le j\le 4$. We assume w.l.o.g.\ that the angles $\phi_j$ are all in $(-\pi,\pi)$. Otherwise (if one of the angles is exactly $\pi$), we shift the domain such that all the angles are in the interior. 
Define $g:(-\pi,\pi] \to \RR$ by $g(x)=f(e^{ix})$, so that $g(x)=2|A|\cos(2x+\theta_A)+2|B|\cos(x+\theta_B)+C$. Our next step is to show that $|g|$ is bounded from below by a polynomial. More precisely, we will show that $|g|$ is bounded from below in neighbourhoods of the roots of $g$, and give a lower bound on the value of $|g|$ outside these neighbourhoods. Technically, we will use the Taylor polynomials of $g$ to obtain these bounds.
For every $1\le j\le 4$, let $T_j$ be the Taylor polynomial of $g$ around $\phi_j$ such that the degree $d_j$ of $T_j$ is minimal and $T_j$ is not identically 0. Thus, we have $T_j(x)=\frac{g^{(d_j)}}{d_j!}(x-\phi)^{d_j}$. We now show that in fact, the degrees of these polynomials are at most three.
\begin{lemma}
\label{lem:degree of taylor}
$d_j\le 4$ for every $1\le j\le 3$.
\end{lemma}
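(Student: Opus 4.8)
The plan is to argue by contradiction. Suppose that for some root $\phi_j$ of $g$ we had $d_j \ge 5$; then in particular $g'(\phi_j)=g''(\phi_j)=g'''(\phi_j)=g^{(4)}(\phi_j)=0$, and the goal is to show that this forces $|A|=|B|=0$, contradicting the standing assumption (made just before the lemma, after disposing of the case $A=B=0$) that $A$ and $B$ are not both zero. Since $g$ is a trigonometric polynomial it is real-analytic, and after the w.l.o.g.\ rotation of the domain each $\phi_j$ lies in the interior $(-\pi,\pi)$, so $d_j$ is well defined and equals the least $k\ge 0$ with $g^{(k)}(\phi_j)\ne 0$; as $\phi_j$ is a root we have $d_j\ge 1$, and once the contradiction is established we conclude $1\le d_j\le 4$.

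The computation I would carry out is the following. Differentiating $g(x)=2|A|\cos(2x+\theta_A)+2|B|\cos(x+\theta_B)+C$ gives $g'(x)=-4|A|\sin(2x+\theta_A)-2|B|\sin(x+\theta_B)$, $g''(x)=-8|A|\cos(2x+\theta_A)-2|B|\cos(x+\theta_B)$, $g'''(x)=16|A|\sin(2x+\theta_A)+2|B|\sin(x+\theta_B)$, and $g^{(4)}(x)=32|A|\cos(2x+\theta_A)+2|B|\cos(x+\theta_B)$. The key observation is that in the pair $(g'',g^{(4)})$ the frequency-$2$ term and the frequency-$1$ term occur with linearly independent coefficient vectors, and likewise in the pair $(g',g''')$; hence the combinations $g''+g^{(4)}=24|A|\cos(2x+\theta_A)$ and $4g'+g'''=-6|B|\sin(x+\theta_B)$ each isolate a single frequency. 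Evaluating at $\phi_j$, where all four derivatives vanish, yields $|A|\cos(2\phi_j+\theta_A)=0$ and $|B|\sin(\phi_j+\theta_B)=0$; substituting the first into $g''(\phi_j)=0$ gives $|B|\cos(\phi_j+\theta_B)=0$, and substituting the second into $g'(\phi_j)=0$ gives $|A|\sin(2\phi_j+\theta_A)=0$. Therefore $|A|^2=|A|^2(\cos^2(2\phi_j+\theta_A)+\sin^2(2\phi_j+\theta_A))=0$ and, identically, $|B|^2=0$, so $A=B=0$, which is the required contradiction.

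I do not expect a genuine obstacle here: the lemma is a finite trigonometry/linear-algebra fact, and the only thing needing care is the bookkeeping of the coefficients in the four derivatives and the verification that $g''+g^{(4)}$ and $4g'+g'''$ really annihilate one frequency apiece. It is worth noting that the bound $4$ is sharp and cannot be lowered to $3$: for instance $g(x)=-2\cos 2x+8\cos x-6=-4(\cos x-1)^2$ (arising from the admissible data $A=-1$, $B=4$, $C=-6$) has a single zero, at $x=0$, of multiplicity $4$ — so the full four-derivative argument above is actually needed. The analytic side conditions (interiority of the $\phi_j$, smoothness of $g$) are immediate and are already supplied by the surrounding construction.
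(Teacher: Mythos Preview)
Your argument is correct, and in fact it is the right fix for the paper's own proof, which is subtly broken. The paper attempts to prove the stronger bound $d_j\le 3$ by assuming $g(\phi_j)=g'(\phi_j)=g''(\phi_j)=g'''(\phi_j)=0$ and pairing $(g,g'')$ and $(g',g''')$ into two invertible $2\times 2$ systems in $\cos(2\phi_j+\theta_A),\cos(\phi_j+\theta_B)$ and $\sin(2\phi_j+\theta_A),\sin(\phi_j+\theta_B)$. In writing the first equation it silently drops the constant $C$ from $g$, so the ``even'' system is not actually homogeneous and the conclusion $\cos(2\phi_j+\theta_A)=\cos(\phi_j+\theta_B)=0$ does not follow. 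Your example $g(x)=-2\cos 2x+8\cos x-6=-4(\cos x-1)^2$ confirms that $d_j=4$ genuinely occurs, so the paper's claimed $d_j\le 3$ (and the sentence ``Thus, $T_j(x)$ is a polynomial of degree at most three'' immediately following the lemma) is false.

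Your proof avoids this by working with $g',g'',g''',g^{(4)}$ instead, where no constant term appears; the linear combinations $g''+g^{(4)}$ and $4g'+g'''$ cleanly isolate the two frequencies, and back-substitution into $g''$ and $g'$ finishes the argument. This is the same linear-algebra idea as the paper's, just shifted up by one derivative, and it yields exactly the stated bound $d_j\le 4$. The downstream argument in the paper only needs $d_j$ bounded by a constant, so the discrepancy between $3$ and $4$ is harmless there, but your version is the one that is actually correct.
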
		
\begin{proof}
It is enough to show that at every point where $g(x)=0$, at least one of the first three derivatives of $g$ is non-zero. Assume by way of contradiction that the first three derivatives are all $0$ at $x$, and $g(x)=0$, then we have
\begin{align*}
g(x)=&2|A|\cos(2x+\theta_A)+2|B|\cos(x+\theta_B)=0\\
g'(x)=&-4|A|\sin(2x+\theta_A)-2|B|\sin(x+\theta_B)=0\\
g''(x)=&-8|A|\cos(2x+\theta_A)-2|B|\cos(x+\theta_B)=0\\
g^{(3)}(x)=&16|A|\sin(2x+\theta_A)+2|B|\sin(x+\theta_B)=0
\end{align*}
Pairing the odd and even derivatives, this can be written as
$$\begin{pmatrix}
2|A| & 2|B|\\
-8|A| & -2|B|
\end{pmatrix}\vect{\cos(2x+\theta_A)}{\cos(x+\theta_B)}=\vect{0}{0}
\text{ and }
\begin{pmatrix}
-4|A| & -2|B|\\
16|A| & 2|B|
\end{pmatrix}\vect{\sin(2x+\theta_A)}{\sin(x+\theta_B)}=\vect{0}{0}$$
If either $|A|$ or $|B|$ are 0 (but not both, as per our assumption above), then clearly either the first or second derivatives are always nonzero (since this is a single trigonometric function).
If $|A|,|B|\neq 0$, then the matrices are invertible, so it must hold that $\sin(2x+\theta_A)=\sin(x+\theta_B)=\cos(2x+\theta_A)=\cos(x+\theta_B)=0$, which clearly has no solution.
\end{proof}
Thus, $T_j(x)$ is a polynomial of degree at most three, with $T_j(\phi_j)=0$. We remark that $T_j$ is computable in polynomial time (in $\norm{f}$), as the coefficients are polynomial in $\norm{A},\norm{B},\norm{\gamma}$.

By Taylor's inequality, we have that for every $x\in [-\pi,\pi]$ it holds that $|g(x)-T_j(x)|\le \frac{M_j|x-\phi_j|^{d_j+1}}{(d_j+1)!}$ where $M_j=\max_{x\in [-\pi,\pi]}\set{g^{(d_j+1)}(x)}\le 32|A|+2|B|$ (where $g$ is extended naturally to the domain $[-\pi,\pi]$). 

Let $\epsilon_1>0$ such that the following hold for every $1\le j\le 4$.
\begin{enumerate}
\item $|g(x)-T_j(x)|\le \frac{1}{2}|T_j(x)|$ for every $x\in (\phi_j-\epsilon_1,\phi_j+\epsilon_1)$. 
\item ${\rm sign}(g'(x))$ does not change in $(\phi_j,\phi_j+\epsilon_1)$ nor in $(\phi_j-\epsilon_1,\phi_j)$.
\item ${\rm sign}(g'(x))={\rm sign}(T_j'(x))$ for every $x\in (\phi_j-\epsilon_1,\phi_j+\epsilon_1)$.
\end{enumerate}
Note that we can assume $(\phi_j-\epsilon_1,\phi_j+\epsilon_1)\subseteq (-\pi,\pi)$, since by our assumption $\theta_j\in (-\pi,\pi)$ for all $1\le j\le 4$.

An $\epsilon_1$ as above exists since $T_j(x)$ is of degree $d_j$, whereas the $|g(x)-T(x)|$ is of degree $d_j+1$, since there are only finitely many points where $g'(x)=0$, and since $T'(x)$ is the Taylor polynomial of degree $d_j-1$ of $g'(x)$ around $\phi_j$, so by bounding the distance $|g'(x)-T'(x)|$ we can conclude the third requirement (see Figure~\ref{fig:functions} for an illustration).
	\begin{figure}[ht]
		\centering
		\includegraphics[width=0.6\linewidth]{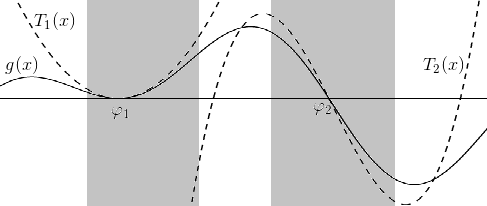}
		\caption{$g(x)$ and two Taylor polynomials: $T_1(x)$ around $\phi_1$ and $T_2(x)$ around $\phi_2$. The shaded regions show where requirements (1)--(3) hold, which determine $\epsilon_1$. Observe that for $T_1$, the most restrictive requirement is $|g(x)-T_1(x)|\le \frac12 T_1(x)$, whereas for $T_2$ the restriction is the requirement that $T_2(x)$ is monotone.}
		\label{fig:functions}
	\end{figure}
For the following, we need also to compute $\epsilon_1$, we thus proceed with the following lemma.
\begin{lemma}
\label{lem:epsilon computable}
$\epsilon_1$ can be computed in polynomial time in $\norm{f}$, and $\frac1\epsilon=2^{n^{O(1)}}$.
\end{lemma}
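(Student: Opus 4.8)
The plan is to treat the three defining requirements for $\epsilon_1$ one at a time, produce an explicit lower bound on the admissible $\epsilon_1$ for each, and then take the minimum. For each requirement we need a bound of the form $\epsilon_1 \ge 2^{-n^{O(1)}}$ that is itself computable in polynomial time in $\norm{f}$, and here $n$ should be read as the index variable so that this size bound propagates correctly into Lemma~\ref{lem:main lemma}. The key observation throughout is that all the functions in play --- $g$, $g'$, $g''$, the Taylor polynomials $T_j$ and their derivatives, and the Taylor remainders --- are either trigonometric polynomials with algebraic coefficients of polynomially bounded description length, or ordinary polynomials in $x-\phi_j$ of degree at most $4$ with such coefficients (using Lemma~\ref{lem:degree of taylor}). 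Reducing each requirement to a statement about when such a concrete low-degree object dominates another such object is what makes everything effective.

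First I would handle requirement (1). On $(\phi_j-\epsilon_1,\phi_j+\epsilon_1)$ we have $|g(x)-T_j(x)| \le \frac{M_j}{(d_j+1)!}|x-\phi_j|^{d_j+1}$ by Taylor's inequality, with $M_j \le 32|A|+2|B|$, while $|T_j(x)| = \frac{|g^{(d_j)}(\phi_j)|}{d_j!}|x-\phi_j|^{d_j}$. So (1) holds as soon as $|x-\phi_j| \le \frac{(d_j+1)|g^{(d_j)}(\phi_j)|}{2M_j}$, which gives an explicit admissible value of $\epsilon_1$ in terms of $|g^{(d_j)}(\phi_j)|$, $|A|$, $|B|$. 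The quantity $g^{(d_j)}(\phi_j)$ is the leading nonzero Taylor coefficient (up to a factorial), so it is a fixed nonzero algebraic number whose description length is polynomial in $\norm{f}$; by the efficient arithmetic and sign-test facts recalled in Section~\ref{sec:algebraic numbers}, and crucially by the Mignotte separation bound~(\ref{eq:Mignotte}) applied to bound it away from $0$, we get $|g^{(d_j)}(\phi_j)| \ge 2^{-\norm{f}^{O(1)}}$, hence the required lower bound on $\epsilon_1$.

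Next, requirement (2): $\operatorname{sign}(g'(x))$ must not change on either side of $\phi_j$ within distance $\epsilon_1$. Since $g'$ is a trigonometric polynomial of bounded degree with algebraic coefficients, it has at most a bounded number of zeros in $(-\pi,\pi]$, and we need $\epsilon_1$ to be at most the distance from $\phi_j$ to the nearest other zero of $g'$ (and at most the distance to the nearest zero of $g'$ on whichever side $\phi_j$ itself is not a zero). These zeros are algebraic numbers (roots of $g'$, obtainable by the standard substitution reducing trigonometric polynomials to ordinary ones), and a lower bound on the pairwise separation of $\phi_j$ from the other zeros again follows from Mignotte's bound~(\ref{eq:Mignotte}), giving $2^{-\norm{f}^{O(1)}}$. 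For requirement (3), $\operatorname{sign}(g'(x)) = \operatorname{sign}(T_j'(x))$, I would argue exactly as for (1) but applied to $g'$ in place of $g$: $T_j'$ is the degree-$(d_j-1)$ Taylor polynomial of $g'$ around $\phi_j$, so $|g'(x)-T_j'(x)| \le \frac12|T_j'(x)|$ on a neighbourhood of explicit radius controlled by the leading nonzero derivative of $g'$ at $\phi_j$, and on such a neighbourhood $T_j'$ and $g'$ have the same sign since $|T_j'|$ dominates the error; the size bound comes from the same Mignotte argument.

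The main obstacle is the uniform effective lower bound on the various leading nonzero Taylor coefficients $g^{(d)}(\phi_j)$ and on zero-separations for $g$ and $g'$: these are algebraic numbers whose defining data we do not see directly but can bound in degree and height polynomially in $\norm{f}$, so the work is in carefully tracking those bounds through the constructions (Taylor coefficients, root isolation of the auxiliary polynomials) and then invoking Mignotte~(\ref{eq:Mignotte}) to conclude $2^{-\norm{f}^{O(1)}}$ rather than something worse. Once all three admissible radii are bounded below by $2^{-\norm{f}^{O(1)}}$ and are polynomial-time computable, taking their minimum yields $\epsilon_1$ with the stated properties, completing the proof.
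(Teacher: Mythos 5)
Your proposal is correct and follows essentially the same route as the paper's proof: isolate each of the three defining conditions, derive an explicit admissible radius for each ($\delta_1,\delta_2,\delta_3$), and set $\epsilon_1$ to their minimum. The paper handles condition (2) first via the zero set $F=\{z:|z|=1\wedge f'(z)iz=0\}$ and Mignotte separation, then condition (1) via the same Taylor-remainder comparison you give (your version directly comparing $|g-T_j|$ to $\tfrac12|T_j|$ is slightly cleaner than the paper's introduction of $h=g-\tfrac12 T_j$, but produces the same bound $\frac{(d_j+1)|g^{(d_j)}(\phi_j)|}{2M_j}$), and condition (3) by applying the condition-(1) argument to $g'$. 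The one step you treat a bit informally --- that $g^{(d_j)}(\phi_j)$ is a \emph{nonzero algebraic number} of polynomially bounded description, despite $\phi_j=\arg(z_j)$ being transcendental --- is also implicit rather than spelled out in the paper; it holds because $g^{(d_j)}(\phi_j)$ is a $\QQ$-linear combination of $\re(z_j^k A/|A|)$, $\re(z_j^k B/|B|)$, etc., and is worth a sentence in a full write-up.
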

\begin{proof}
We start with Condition 2, and compute $\delta_1>0$ such that ${\rm sign}(g'(x))$ does not change in $(\phi_j-\delta_1,\phi_j)$ nor in $(\phi_j,\phi_j+\delta_1)$. This is done as follows. Recall that $g(x)=f(e^{ix})$, then we have $g'(x)=f'(e^{ix})ie^{ix}$. Since $f'$ is a polynomial with algebraic coefficients, then $F=\set{z: |z|=1\wedge f'(z)iz=0}$ consists of algebraic numbers whose degree and height are polynomial in those of $A$ and $B$, and we have that $\set{x: g'(x)=0}=\set{\arg(z):z\in F}$. By similar arguments as those by which we found the roots of $f$ on the unit circle, we can conclude that $F$ contains at most four points. Thus, it is enough to set $\delta_1$ such that $\left(\bigcup_{j=1}^4 (\phi_j-\delta_1,\phi_j)\cup (\phi_j,\phi_j+\delta_1)\right)\cap F=\emptyset$.
By~Equation (\ref{eq:Mignotte}), we have that for $z\neq z'\in F$ it holds that $|z-z'|>\frac{\sqrt{6}}{d^\frac{d+1}{2}\cdot H^{d-1}}$ where $d$ and $H$ are the degree and height of $f'(z)iz$. Thus, $1/|z-z'|$ is $2^{\norm{f}^{O(1)}}$, and has a polynomial description. Since $|\arg(z)-\arg(z')|>|z-z'|$, we conclude that by setting $\delta_1=\min{|z-z'|: z\neq z'\in F}/3$, and it holds that $\frac{1}{\delta_1}$ has a polynomial description in $\norm{f}$, and $\delta_1$ satisfies the required condition.

We now proceed to handle Condition 1, and compute $\delta_2>0$ such that $|g(x)-T_j(x)|\le \frac{1}{2}|T_j(x)|$ for every $x\in (\phi_j-\delta_2,\phi_j+\delta_2)$. Recall that $T_j(x)=\frac{g^{(d_j)}}{d_j!}(x-\phi_j)^{d_j}$.
Note that this case is more challenging than Condition 2, as unlike $g(x)=f(e^{ix})$, the polynomial $T_j(x)$ has potentially transcendental coefficients (namely $\phi_j$). In order to ignore the absolute value, assume  $T_j(x)\ge g(x)\ge \frac12 T_j(x)>0$ in an interval $(\phi_j,\phi_j+\xi)$ for some $\xi>0$ (the other cases are treated similarly). Then, the inequality above becomes $g(x)-\frac12 T_j(x)\ge 0$. 			
Since the degree of $T_j$ is $d_j$, then by the definition of $T_j$, the first $d_j-1$ derivatives of $g$ in $\phi_j$ vanish. Define $h(x)=g(x)-\frac12 T_j(x)$, then we have $h(\phi_j)=0$, $h'(\phi_j)=0,\ldots,h^{(d_j-1)}(\phi_j)=0$ and $h^{(d_j)}(\phi_j)=g^{(d_j)}(\phi_j)-\frac12 g^{(d_j)}(\phi_j)=\frac12 g^{(d_j)}(\phi_j)$. By our assumption, $T_j(x)\ge \frac12 T_j(x)$ for $x\in (\phi,\phi+\xi)$, so $h^{(d_j)}>0$.
In addition, recall that $|h^{(d_j+1)}(x)|=|g^{(d_j+1)}(x)|\le M_j\le 64|A|+2|B|$ for every $x\in [-\pi,\pi]$. Thus, by writing the $d_j$-th Taylor expansion of $h(x)$ around $\phi$, we have that $h(x)=h^{{d_j}}(\phi)(x-\phi)^{d_j}+R(x)$ where $|R(x)|\le \frac{M_j}{(d_j+1)!}(x-\phi)^{k+1}$. We thus have that for $x\in (\phi,\phi+\frac{g^{d_j}(\phi)(d_j+1)}{2 M_j})$ it holds that $h(x)\ge 0$. We can now set $\delta_2=\frac{g^{d_j}(\phi)(d_j+1)}{2 M_j}$, which satisfies the required condition (or a similar $\delta_2$ after analyzing the other cases).

Finally, we address Condition 3, and compute $\delta_3>0$ such that ${\rm sign}(g'(x))={\rm sign}(T'_j(x))$ for every $x\in (\phi_j-\delta_3,\phi_j+\delta_3)$. Observe that $T'_j(x)$ is the $d_j-1$-th Taylor polynomial of $g'(x)$ around $\phi$. Thus, by following the reasoning used to find $\delta_2$, we can find $\delta_3$ such that $|g'(x)-T'_j(x)|\le \frac12 |T_j(x)|$ for every $x\in (\phi-\delta_3,\phi+\delta_3)$, and in particular it holds that  ${\rm sign}(g'(x))={\rm sign}(T'_j(x))$ for every $x\in (\phi_j-\delta_3,\phi_j+\delta_3)$.

By setting $\epsilon_1=\min\set{\delta_1,\delta_2,
\delta_3}$, we conclude the proof.
\end{proof}

Conditions 1,2,3 above imply that within the intervals $(\phi_j-\epsilon_1,\phi_j+\epsilon_1)$ we have that $|g(x)|\ge \frac{1}{2}|T_j(x)|$, that $g(x)$ and $T_j(x)$ have the same sign, and that they are both decreasing/increasing together.

We now claim that there exists a polynomial $p(n)$ and a number $N_2\in \NN$ such that for every $n>N_2$ it holds that  $|g(\arg(\gamma^n))|>\frac{1}{p(n)}$. In order to compute $p(n)$, we compute separate polynomials for the domain $\bigcup_{j=1}^4 (\phi_j-\epsilon_1,\phi_j+\epsilon_1)$ and for its complement. Then, taking their minimum and bounding it with a polynomial yields $p(n)$.

At this point we also drop the assumption that either $A$ or $B$ are nonzero. Indeed, if $A=B=0$, then $C\neq 0$, and the above is trivial.

We start by considering the case where $\arg(\gamma^n)\in \bigcup_{j=1}^4 (\phi_j-\epsilon_1,\phi_j+\epsilon_1)$. Recall that since $\gamma$ is not a root of unity, then for every $n>N_1$ it holds that $\gamma^n\notin \set{z_1,\ldots,z_4}$. Then, by Lemma~\ref{lem:Baker on unit circle}, for every  $1\le j\le 4$ and every $n\ge N_2=\max\set{N_1,2}$ we have $|\gamma^n-z_j|> \frac{1}{n^{(k^D)}}$. In addition, $|\gamma^n-z_j|\le |\arg(\gamma^n)-\phi_j|$ (since the LHS is the Euclidean distance and the RHS is the spherical distance). Therefore, $|\arg(\gamma^n)-\phi_j|>\frac{1}{n^{(k^D)}}$, so either $\arg(\gamma^n)>\phi_j+\frac{1}{n^{(k^D)}}$ or $\arg(\gamma^n)<\phi_j-\frac{1}{n^{(k^D)}}$. Next, we have that if $\arg(\gamma^n)\in (\phi_j-\epsilon_1,\phi_j+\epsilon_1)$ for some $1\le j\le 4$, then $|g(\arg(\gamma^n))|\ge \frac12 |T_j(\arg(\gamma^n))|\ge \frac12\min\set{{|T_j(\phi_j+\frac{1}{n^{(k^D)}})|,|T_j(\phi_j-\frac{1}{n^{(k^D)}})|}}$, where the last inequality follows from condition 3 above, which implies that $T_j$ is monotone with the same tendency as $g$. 

Observe that $T_j(\phi_j-\frac{1}{n^{(k^D)}})=\frac{g^{(d_j)}(\phi)}{d_j!}\frac{1}{n^{(k^D)}}$ and similarly $T_j(\phi_j+\frac{1}{n^{(k^D)}})=-\frac{g^{(d_j)}(\phi)}{d_j!}\frac{1}{n^{(k^D)}}$ are both inverse polynomials.
Thus, $|g(\arg(\gamma^n))|$ is bounded from below by an inverse polynomial. Moreover, these polynomials can be easily computed in time polynomial in $\norm{f}$.

Finally, we note that for $x\notin \bigcup_{j=1}^4 (\phi_j-\epsilon_1,\phi_j+\epsilon_1)$ we can compute in polynomial time a bound $B>0$ such that $|g(x)|>B$. Indeed, $B=\min\set{|g(x)|: x\in [-\pi,\pi]\setminus\bigcup_{j=1}^4 (\phi_j-\epsilon_1,\phi_j+\epsilon_1)}$ (where $g(-\pi)$ is defined naturally by extending the domain), and we have that $|B|>0$ since we assumed non of the $\phi_j$ are exactly at $\pi$ (in which case we would have had $g(-\pi)=0$). In particular, we can combine the two domains and compute a polynomial $p$ as required.
We remark that we can compute $\norm{B}$ in polynomial time, since it is either at least $\frac12|T_j(\phi_j\pm \epsilon_1)|$ for some $1\le j\le 4$ (and by Lemma~\ref{lem:epsilon computable}, $\norm{\epsilon_1}$ can be computed in polynomial time), or it is the value of one of the extrema of $g$, and the latter can be computed by finding the extrema of the (algebraic) function $f$ on the unit circle.

To recap, for every $n>N_2$ it holds that $|g(\arg(\gamma^n))|>\frac{1}{p(n)}$ for a non-negative polynomial $p$, and both $N_2$ and $p$ can be computed in polynomial time in the description of the input.

Nest, we wish to find $N_3\in \NN$ such that for every $n>N_3$ it holds that $r(n)<\frac{1}{p(n)}$. Recall that 	$r(n)=\sum_{l=1}^m D_l \beta^n_l+ \co{D_l}\co \beta^n_l$. Let $1\le l\le m$, and consider $\beta_l$. 
Since $\beta_l$ is algebraic, then so is $1-|\beta_l|$. Indeed, $1-|\beta_l|=1-\sqrt{\beta_l\co{\beta_l}}$. Moreover, we get that $\deg(1-|\beta_l|)\le \deg(\beta_l)^4$ the root of a polynomial of degree at most $\deg(\beta_l)^4$, 
and of height polynomial in $\heig{\beta_l}$.  
Since $|\beta_l|<1$, By applying Equation~\ref{eq:Mignotte}, we get $1-|\beta_l|=|1-|\beta_l||>\frac{\sqrt{6}}{d^{(d+1)/2} \heig{\beta_l}^{d-1}}$ where $d=\deg(\beta_l)^{O(1)}$. Recall that $\heig{\beta_l}=2^{\norm{I}^{O(1)}}$. Thus, we can compute $\epsilon\in (0,1)$ and $N_3\in \NN$ such that:
\begin{enumerate}
\item $\frac{1}{\epsilon}=2^{\norm{I}^{O(1)}}$
\item $N_3=2^{\norm{I}^{O(1)}}$
\item For every $n>N_3$ it holds that $|r(n)|<(1-\epsilon)^n$
\end{enumerate}

Finally, by taking $N_4\in \NN$ such that $(1-\epsilon)^n<\frac{1}{p(n)}$ (which satisfies $N_4=2^{\norm{I}^{O(1)}}$) for all $n>N_4$, we can now conclude that for every $n>\max\set{N_2,N_3,N_4}$, the following hold.
\begin{enumerate}
\item $f(\gamma^n)=g(\arg(\gamma^n))\neq 0$.
\item If $f(\gamma^n)>0$, then $g(\arg(\gamma^n))>0$, so $g(\arg(\gamma^n))>\frac{1}{p(n)}$. Since $|r(n)|<\frac{1}{p(n)}$, it follows that $f(\gamma^n)+r(n)=g(\arg(\gamma^n))+r(n)>\frac{1}{p(n)}-|r(n)|>0$. Conversely, if $f(\gamma^n)+r(n)>0$, then $g(\arg(\gamma^n))+r(n)>0$, but sinc e $|g(\arg(\gamma^n))|>\frac{1}{p(n)}$ and $|r(n)|<\frac{1}{p(n)}$, then it must hold that $g(\arg(\gamma^n))>0$, so $f(\gamma^n)>0$.
\item If $f(\gamma^n)<0$, then $g(\arg(\gamma^n))<0$, so $g(\arg(\gamma^n))<-\frac{1}{p(n)}$. Since $|r(n)|<\frac{1}{p(n)}$, it follows that $f(\gamma^n)+r(n)=g(\arg(\gamma^n))+r(n)<-\frac{1}{p(n)}+|r(n)|<0$. Conversely, if $f(\gamma^n)+r(n)<0$, then $g(\arg(\gamma^n))+r(n)<0$, but since $|g(\arg(\gamma^n))|>\frac{1}{p(n)}$ and $|r(n)|<\frac{1}{p(n)}$, then it must hold that $g(\arg(\gamma^n))<0$, so $f(\gamma^n)<0$.
\end{enumerate}
Which concludes the proof of Lemma~\ref{lem:main lemma}.
\qed

\small
\bibliographystyle{plain}
\bibliography{PolytoPoly3D}
\normalsize

\end{document}